\documentclass[conference]{IEEEtran}
\IEEEoverridecommandlockouts
\usepackage{tikz}
\usepackage{amsmath}
\usepackage{filecontents}
\usepackage{xspace}
\usepackage{algorithm}
\usepackage{algorithmic}
\usepackage{array}
\usepackage{bm}
\usepackage{xspace}
\usepackage{amsfonts}
\usepackage{multirow}
\usepackage{bm}
\usepackage{amsmath,amsfonts}
\usepackage{xcolor,soul}
\usepackage{amsthm}
\usepackage{comment}
\usepackage{array}
\newcommand{\methodName}{\textsc{FedSpy-LLM}\xspace}
\newtheorem{theorem}{Theorem}
\newtheorem{lemma}{Lemma}

\usepackage{colortbl} % For coloring rows/columns
\usepackage{lipsum}   % For dummy text
\usepackage{graphicx} % For resizebox
\usepackage{cellspace} % For adding padding inside table cells
\usepackage{enumitem}
\usepackage{booktabs}
\usepackage{makecell}
\usepackage{threeparttable} % For adding footnotes to tables

\usepackage{color,soul}
\definecolor{highlight1}{rgb}{1.0, 1.0, 0.0} % Yellow
\definecolor{highlight2}{rgb}{0.8, 1.0, 0.8} % Light Green

\newcommand{\hlone}[1]{\sethlcolor{highlight1}\hl{#1}}
\newcommand{\hltwo}[1]{\sethlcolor{highlight2}\hl{#1}}
\newcommand*\circled[1]{\tikz[baseline=(char.base)]{
            \node[shape=circle,fill,inner sep=1pt] (char) {\textcolor{white}{#1}};}}

\def\BibTeX{{\rm B\kern-.05em{\sc i\kern-.025em b}\kern-.08em
    T\kern-.1667em\lower.7ex\hbox{E}\kern-.125emX}}
\begin{document}

\title{\methodName: Towards Scalable and Generalizable Data Reconstruction Attacks from Gradients on LLMs}
\author{\IEEEauthorblockN{Syed Irfan Ali Meerza\IEEEauthorrefmark{1},
Feiyi Wang\IEEEauthorrefmark{3}, Jian Liu\IEEEauthorrefmark{1}\IEEEauthorrefmark{2}}
\IEEEauthorblockA{\IEEEauthorrefmark{1}University of Tennessee, Knoxville, TN, USA}
\IEEEauthorblockA{\IEEEauthorrefmark{3}Oak Ridge National Laboratory,
Oak Ridge, USA}
\IEEEauthorblockA{\IEEEauthorrefmark{2}University of Georgia, Athens, GA, USA}
\IEEEauthorblockA{Emails: smeerza@vols.utk.edu; fwang2@ornl.gov; jianliu@uga.edu}}
\maketitle

\begin{abstract}
Given the growing reliance on private data in training Large Language Models (LLMs), Federated Learning (FL) combined with Parameter-Efficient Fine-Tuning (PEFT) has garnered significant attention for enhancing privacy and efficiency.
Despite FL's privacy benefits, prior studies have shown that private data can still be extracted from shared gradients. However, these studies, mainly on full-parameter model training, are limited to reconstructing small batches and short input sequences, and specific model architectures, such as encoder-based or decoder-based models.
The reconstruction quality will become even worse when dealing with gradients from PEFT methods. 
To fully understand the practical attack surface of federated LLMs, this paper proposes \methodName, a scalable and generalizable data reconstruction attack designed to reconstruct training data with larger batch sizes and longer sequences while generalizing across diverse model architectures, even when PEFT methods are deployed for training.
At the core of \methodName is a novel gradient decomposition strategy that exploits the rank deficiency and subspace structure of gradients, enabling efficient token extraction while preserving key signal components at scale. This approach further mitigates the reconstruction challenges introduced by PEFT's substantial null space, ensuring robustness across encoder-based, decoder-based, and encoder-decoder model architectures. Additionally, by iteratively aligning each token’s partial-sequence gradient with the full-sequence gradient, \methodName ensures accurate token ordering in reconstructed sequences. 
Extensive evaluations demonstrate that \methodName consistently outperforms prior attacks and maintains strong reconstruction quality under realistic and challenging settings, revealing a broader and more severe privacy risk landscape in federated LLMs. These findings underscore the urgent need for more robust privacy-preserving techniques in future FL systems.
\end{abstract}

%\begin{IEEEkeywords}
%component, formatting, style, styling, insert
%\end{IEEEkeywords}

\section{Introduction}

\begin{table*}[t]
\centering
\caption{Comparison of state-of-the-art data reconstruction attacks on LLMs.}
\label{tab: comparision}
\renewcommand\arraystretch{1.2}

\resizebox{\linewidth}{!}{%
\begin{tabular}{|c|c|c|c|c|c|c|c|c|c|}
\hline
\multirow{2}{*}{\textbf{Attack}} &
\multirow{2}{*}{\textbf{Year}} &
\multirow{2}{*}{\textbf{Attack Method}} &
\multirow{2}{*}{\makecell{\textbf{Target Models}\textsuperscript{1}}} &
\multirow{2}{*}{\textbf{Max Batch Size}} &
\multirow{2}{*}{\textbf{PEFT}} &
\multicolumn{3}{c|}{\textbf{Effectiveness Against}\textsuperscript{2}} &
\multirow{2}{*}{\textbf{Time Complexity}} \\

& & & & & & \textbf{Encoder-based} & \textbf{Decoder-based} & \textbf{Encoder-Decoder} & \\ \hline

DLG~\cite{zhu2019deep} &
2019 &
Optimization &
Encoder-based &
1 &
$\times$ &
High & Medium & Low &
$\mathcal{O}(B \cdot P \cdot d^2)$ \\ \hline

TAG~\cite{dengtaggradient} &
2021 &
Optimization &
Encoder-based &
1 &
$\times$ &
High & Medium & Low &
$\mathcal{O}(B \cdot P \cdot d^2)$ \\ \hline

LAMP~\cite{dimitrov2022lamp} &
2022 &
Optimization &
Encoder-based &
4 &
$\times$ &
High & Medium & Low &
$\mathcal{O}(B \cdot P \cdot d^2)$ \\ \hline

BGP~\cite{li2023beyond} &
2022 &
Optimization + Search &
Encoder-based &
8 &
$\times$ &
High & N/A\textsuperscript{3} & Low &
$\mathcal{O}(B \cdot d^3)$ \\ \hline

FILM~\cite{gupta2022recovering} &
2022 &
Search &
Decoder-based &
128 &
$\times$ &
Low & High & Medium &
$\mathcal{O}(B \cdot |V| \cdot d)$ \\ \hline

\multirow{2}{*}{DAGER~\cite{petrov2024dager}} &
\multirow{2}{*}{2024} &
\multirow{2}{*}{Search} &
\multirow{2}{*}{\makecell{Encoder-based, Decoder-based}} &
\multirow{2}{*}{128} &
\multirow{2}{*}{$\checkmark$\textsuperscript{4}} &
\multirow{2}{*}{Medium} & \multirow{2}{*}{High} & \multirow{2}{*}{Low} &
$\mathcal{O}(B \cdot P \cdot d^2)$ (Decoder) \\

&&&&&&&&& $\mathcal{O}(B \cdot |V|^P \cdot d^2)$ (Encoder) \\ \hline

\textbf{\methodName} &
2025 &
Optimization &
\makecell{Encoder-based, Decoder-based,\\ Encoder-Decoder} &
128 &
$\checkmark$ &
High & High & High &
$\mathcal{O}(B \cdot P \cdot d^2)$ \\ \hline

\end{tabular}%
}

\vspace{1mm}
\parbox{\linewidth}{\footnotesize
\textsuperscript{1}\textit{Target Models} refer to the types of models evaluated in the original papers, encompassing encoder-based, decoder-based, and encoder-decoder architectures.\\
\textsuperscript{2}\textit{Effectiveness Against} refers to the relative attack performance across different target model architectures.\\
\textsuperscript{3} BGP specifically targets the Pooler and Classifier layers, which are architectural components unique to encoder-based models.\\
\textsuperscript{4} Low-rank PEFT updates make gradients sparse, reducing the effectiveness of search-based methods in distinguishing true embeddings from noise.
}
\vspace{-2mm}
\end{table*}

Language models (LMs) are statistical models that assign probabilities to sequences of words, forming the backbone of many natural language processing tasks. Modern neural-network-based LMs are trained on vast datasets, often nearing a terabyte of text data~\cite{brown2020language, raffel2020exploring}. 
However, the vast amounts of data required for training these models often include sensitive information, raising significant privacy concerns when sharing data with third parties~\cite{das2024security}.
Federated Learning (FL) offers a
promising solution to these privacy issues by enabling multiple
parties to train a model collaboratively without sharing their
private data~\cite{mcmahan2017communication}.  
%\textcolor{red}{While federated learning reduces direct data exposure, it does not by itself provide formal privacy guarantees, and additional mechanisms such as secure aggregation or differential privacy are required for rigorous privacy protection}~\cite{mcmahan2017communication}. 
Instead of exchanging raw data, participants share only the gradients computed on their local data with a central server. This approach has proven effective in fine-tuning Large Language Models (LLMs) in many privacy-sensitive areas, such as law~\cite{zhang2023fedlegal} and healthcare~\cite{sadilek2021privacy}\footnote{Note that, while FL reduces direct data exposure, it does not by itself provide formal privacy guarantees, and additional mechanisms such as secure aggregation or differential privacy are required for rigorous privacy protection.}.

Despite its advantages, FL still poses privacy risks. Existing studies have demonstrated that FL can still leak private information from shared gradients.  In the image domain, data reconstruction attacks can reconstruct detailed images from gradients, compromising visual data privacy~\cite{geiping2020inverting, yin2021see, li2022auditing}. Similar vulnerabilities exist for audio data, where adversaries can reconstruct original recordings~\cite{ovi2024gradient, li2023speech}. 
While LLMs trained with FL are also susceptible to data reconstruction attacks (e.g.,~\cite{dimitrov2022lamp, dengtaggradient, gupta2022recovering, li2023beyond}), 
the discrete nature of text data and the high dimensionality of the search space complicate these attacks, often resulting in only partial recovery of small batches and short sequences. Additionally, the architectural characteristics of LLMs, such as token embeddings and attention mechanisms, obscure the direct relationship between gradients and the original data, making reconstructions less accurate and coherent. Unlike traditional classification tasks, LLMs are trained with an autoregressive next-token prediction objective, where gradients entangle information from multiple tokens and their positions simultaneously. Consequently, reconstruction requires recovering not only the token identities but also their correct sequential order, a challenge that \methodName explicitly addresses through gradient subspace-based token recovery followed by sequence order calibration.

Moreover, the substantial size of LLMs presents additional challenges. Directly training or fine-tuning these models for downstream tasks incurs significant communication overhead and strains the storage and computational resources of participating devices. A promising solution to these challenges is Parameter-Efficient Fine-Tuning (PEFT)~\cite{zhang2023fedpetuning}, which involves updating a small number of additional parameters while keeping the vast majority of the model's original weights frozen (e.g. LoRA~\cite{hu2021lora} and Adapters~\cite{houlsby2019parameter}). 
To better accommodate the decentralized nature of FL, various PEFT methods have been proposed, such as SLoRA~\cite{babakniya2023slora} and FedAdapter~\cite{cai2022fedadapter}, each tailored to preserve the efficiency and effectiveness of employing PEFT in FL settings.
By reducing the volume of gradients exchanged, PEFT not only alleviates resource constraints but also narrows the attack surface, making it more difficult for adversaries to exploit vulnerabilities.

\textbf{Prior Data Reconstruction Attacks on LLMs.}
Data reconstruction attacks, first proposed in~\cite{zhu2019deep}, demonstrated that an adversary could fully restore a client’s private data samples by finding the optimal pair of dummy input and label that best matches the shared gradients. While significant progress has been made in image reconstruction, text-based data reconstruction attacks remain relatively underdeveloped.
In FL, where discrete text data trains language models, various methods have been developed to recover private text from shared gradients. Early methods, DLG~\cite{zhu2019deep} and TAG~\cite{dengtaggradient}, adapt gradient matching techniques to optimize dummy inputs and labels, effectively recovering private tokens. However, they encountered scalability challenges, particularly with larger sequences or batch sizes. To improve reconstructions, LAMP~\cite{dimitrov2022lamp} incorporates language model priors, combining continuous optimization to minimize gradient reconstruction loss with discrete search methods that refine token sequences based on perplexity scores, ensuring semantically plausible outputs. FILM~\cite{gupta2022recovering} employs beam search to systematically explore token sequences, achieving higher recovery accuracy, albeit at a higher computational cost. More recently, DAGER~\cite{petrov2024dager} leverages the low-rank structure of transformer gradients, efficiently identifying token embeddings within the gradient's span to enable exact token recovery, particularly for larger batches and longer sequences but it faces challenges with exhaustive search heuristics in encoder-based models, particularly when dealing with extremely large vocabularies or datasets.
Table~\ref{tab: comparision} provides a comprehensive comparison of the scalability and generalization performance of existing attacks. 

\textbf{Scalability and Generalization Issues of Prior Attacks.}
Despite the existing efforts, data reconstruction attacks on LLMs face significant challenges in practical FL settings: 
\setlist[itemize]{leftmargin=*}
\begin{itemize}
    \item \circled{1} \textit{\underline{Batch Size and Context Length}:}  
    In LLMs, gradients are computed over full sequences, blending information across tokens and making individual token contributions hard to isolate. This, along with the highly non-convex optimization landscape, hampers the effectiveness of gradient-based reconstruction attacks like DLG~\cite{zhu2019deep}, TAG~\cite{dengtaggradient}, LAMP~\cite{dimitrov2022lamp}, and BGP~\cite{li2023beyond}, which only scale to small batches (1–8) and short sequences. Their dependence on full gradient signals limits scalability and practicality.
    In contrast, more recent methods like FILM~\cite{gupta2022recovering} and DAGER~\cite{petrov2024dager} improve scalability (up to batch size 128) by using search-based approaches. However, the trade-off is significant computational overhead, especially when applied to more complex or larger-scale LLM architectures (e.g., encoder-based vs.\ encoder-decoder), where the search space grows exponentially. This is because, in encoder-based models, each token attends to all other tokens bidirectionally, leading to complex gradient interactions that inflate the effective search space. Unlike decoder-only models with autoregressive factorization, encoder models do not provide a natural sequential structure to constrain token prediction. Similarly, in encoder-decoder architectures, the interaction between encoder and decoder layers (especially through cross-attention) introduces additional entanglement, making it harder to evaluate isolated token candidates. As a result, search-based methods like DAGER struggle to identify the correct token among many candidates due to poor alignment between isolated token gradients and full-sequence gradients in these architectures.

    \item \circled{2} \textit{\underline{Reconstruction from PEFT Gradients}:} The use of PEFT methods (e.g.,~\cite{babakniya2023slora,cai2022fedadapter}) complicates the process even further. By modifying only a small subset of parameters, PEFT techniques generate gradients that primarily reflect localized changes rather than full-model updates. This sparsity weakens gradient-based attacks like DLG, TAG, LAMP, and BGP, which rely on rich gradient signals to distinguish token embeddings.
    Although search-based methods like DAGER can recover tokens even under PEFT, 
    the low-rank nature of PEFT updates results in sparse gradients that lack sufficient information~\cite{song2023sparse}. This sparsity increases ambiguity by making it harder to distinguish between the actual token embeddings and other unrelated embeddings that produce similar gradients.

    \item \circled{3} \textit{\underline{Model Architecture}:} 
    Experimental results reveal that many gradient inversion attacks face architectural constraints. Although methods like DLG~\cite{zhu2019deep}, TAG~\cite{dengtaggradient}, and LAMP~\cite{dimitrov2022lamp} are nominally architecture-agnostic, their reliance on gradient matching and focus on encoder-based models limits effectiveness for decoder-based and encoder-decoder architectures, where cross-attention adds complexity. BGP~\cite{li2023beyond}, which leverages the Pooler layer for attacks, demonstrates strong performance against encoder-based models but is less effective for decoder-based models due to the absence of a Pooler layer. FILM~\cite{gupta2022recovering}, which uses word embedding gradients in a search-based framework, excels in decoder-based models but struggles with encoder-based and encoder-decoder setups due to bidirectional context. DAGER~\cite{petrov2024dager} broadens coverage across architectures, but suffers from high computational costs in encoder-decoder models, where the interplay of encoder and decoder states greatly enlarges the search space. These architectural challenges, compounded by optimization difficulties, sparse gradients from PEFT, and limited positional cues, highlight the gap between theoretical attacks and practical feasibility.
   
    \item \circled{4} \textit{\underline{Time Complexity}:} 
    The time complexities of data reconstruction attacks vary significantly due to their reconstruction strategies. Optimization-based approaches such as DLG~\cite{zhu2019deep}, TAG~\cite{dengtaggradient}, and LAMP~\cite{dimitrov2022lamp} exhibit a complexity of $\mathcal{O}(B \cdot P \cdot d^2)$, where $B$ is the batch size, $P$ the sequence length, and $d$ the embedding dimension. These methods rely on iterative gradient matching and are generally efficient, though often less precise for longer sequences or sparse gradients. 
    FILM~\cite{gupta2022recovering} uses beam search, increasing complexity to $\mathcal{O}(B \cdot |V| \cdot d)$, where $|V|$ is the vocabulary size.
    %FILM~\cite{gupta2022recovering}, introduces a beam search mechanism over candidate tokens, resulting in higher overhead dominated by $\mathcal{O}(B \cdot |V| \cdot d)$, where $|V|$ is the vocabulary size. 
    BGP~\cite{li2023beyond} adds an additional analytic stage based on tensor decomposition, bringing the total complexity to $\mathcal{O}(B \cdot d^3)$. In contrast, 
    DAGER~\cite{petrov2024dager}, with exhaustive token search, incurs exponential complexity $\mathcal{O}(B \cdot |V|^P \cdot d^2)$ for encoder-based models. While search-based methods often yield better accuracy, their high computational overhead limits scalability across sequence lengths, batch sizes, and architectures.
    
    \item \circled{5} \textit{\underline{Sequence Order}:}  
    Frozen positional embeddings and gradient averaging during fine-tuning strip away direct cues about token positions. This loss of positional information, compounded by the model’s non-convex optimization landscape, makes it difficult to recover the original token order. 
    As a result, reconstruction methods attempting token-level fidelity face particular difficulty in restoring the precise temporal relationships between tokens, especially in longer sequences or architectures with complex components like stacked self-attention layers.

\end{itemize}

\textbf{\methodName.} 
To address the aforementioned challenges, we introduce \methodName, a more scalable and generalizable data reconstruction attack on LLMs that can effectively reconstruct training data with larger batch sizes and longer sequences, even under PEFT-induced gradient sparsity. Our analysis (Section~\ref{sec: rank}) reveals that neural network gradients, including those from PEFT methods, are inherently low-rank, constrained by the total number of tokens, and comprise linear combinations of the input embeddings. This implies that the input embeddings are confined within a low-dimensional subspace formed by these gradient vectors. By leveraging this subspace to guide the optimization search, we streamline the process, achieving more effective and efficient convergence. With this guided approach, \methodName~significantly enhances scalability by accommodating increases in batch size and sequence length. Notably, this \textit{gradient subspace guidance} also promotes better generalization and ensures true model-architecture-agnosticism. Since the gradients can be expressed as linear combinations of the input token embeddings weighted by their influence on the loss, regardless of whether the underlying model is encoder-based, decoder-based, or encoder-decoder, they inherently share a common linear structure. This consistent linear formulation across architectures allows \methodName to remain model-agnostic, guiding the optimization without relying on architecture-specific features like cross-attention, thereby controlling the combinatorial explosion and mitigating prohibitive computational costs.

Additionally, to specifically address PEFT-related sparsity, we employ \textit{null space regularization}. After computing the gradient updates, we construct a gradient matrix by stacking the gradients of the loss with respect to the input embeddings across the batch. 
This matrix captures the span of directions influenced by the loss. We identify directions in its null space that do not contribute to the loss and introduce a penalty term to constrain deviations along these directions.  This projection-based penalty ensures that each reconstructed embedding remains close to the true embedding by discouraging deviations in unconstrained dimensions. Consequently, the augmented objective balances consistency between constrained and unconstrained directions, mitigating noise and ambiguity that would otherwise arise from the sparse gradients typical of PEFT. Moreover, to improve the order correctness of the reconstructed token sequence, we introduce a \textit{Sequence Order Calibration} phase, in which we incorporate sequential token recovery aided by gradient alignment. Specifically, we reconstruct the correct token sequence by iteratively assigning tokens to positions based on gradient alignment. For each position, candidate tokens are evaluated by computing their isolated gradients and comparing these to the full sequence gradient, leveraging positional and contextual dependencies to determine the token with the highest alignment, which is then fixed before proceeding to subsequent positions.

We conducted extensive experiments on a range of LLMs with varying sizes and architectures, tailored to tasks such as sentiment analysis and next-word prediction. Our evaluation spans multiple datasets and PEFT methods in FL (i.e., SLoRA~\cite{babakniya2023slora} and FedAdapter~\cite{cai2022fedadapter}), along with seven state-of-the-art baseline attacks. The results consistently show that \methodName achieves enhanced effectiveness, superior scalability, and generalizability compared to existing attacks.

\section{Preliminaries \& Rank-Deficient Gradients}
\subsection{Data Reconstruction Attack}
Data reconstruction attack, a.k.a. gradient inversion attack, significantly challenges the privacy assurances of FL~\cite{jeon2021gradient}. In this attack, an adversary reconstructs a client's private data $(x_i, y_i)$ by leveraging the gradient updates $\nabla_{\theta} f(x_i, y_i)$ sent during a communication round. This scenario assumes the presence of an honest-but-curious server, which follows the FL protocol yet seeks to infer sensitive information. Additionally, the adversary could be a malicious analyst who eavesdrops on the communication channel. 
The objective function commonly used in such attacks is described as~\cite{geiping2020inverting}:
\begin{equation}\label{eq: matching}
    \underset{(x_i^*, y_i^*)}{\arg \min} \; \mathcal{L}_{rec}(\nabla_{\theta} f(x_i, y_i), \nabla_{\theta} f(x_i^*, y_i^*)),
\end{equation}
where $\mathcal{L}_{rec}$ denotes a distance loss and $*$ denotes the reconstructed or dummy data. These attacks can be adapted for LLM training with the next-word prediction by targeting the reconstruction of token sequences $\textbf{x} = [x_1, x_2, \dots, x_n]$, which lack discrete labels typical of classification tasks. The objective is to minimize the distance between the true gradients and the gradients of reconstructed sequences: 
\begin{equation}
    \underset{\mathbf{x}^{*}_{i}}{\arg \min} \; \mathcal{L}_{rec}\left(\nabla_{\theta} \sum_{t=1}^n \ell(x_t | x_{<t}), \; \nabla_{\theta} \sum_{t=1}^n \ell(x_t^* | x_{<t}^*)\right).
    \label{eq: next-token}
\end{equation}

\subsection{Transformers}
In this work, we address the problem of gradient inversion in transformer models~\cite{vaswani2017attention}, specifically focusing on LLMs used for text. The process begins with tokenizing the input text into tokens from a fixed vocabulary of size $\nu$. Each token is then converted into a one-hot vector, denoted as $\bm{t_1}, \bm{t_2}, \ldots, \bm{t_n} \in \mathbb{R}^\nu$, where $n$ is the sequence length. These tokens are subsequently transformed into embedding vectors $\bm{x_1}, \bm{x_2}, \ldots, \bm{x_n} \in \mathbb{R}^d$, where $d$ is the chosen embedding size, by multiplying them with the trained embedding weights $\bm{W_{embed}} \in \mathbb{R}^{\nu \times d}$.

In addition to token embeddings, token positions are encoded using positional embedding weights $\bm{W_{pos}} \in \mathbb{R}^{P \times d}$, where $P$ is the maximum allowed token sequence length. For an input sequence of length $n$ ($n\leq P$), positional embeddings $\bm{p_1}, \bm{p_2}, \ldots, \bm{p_n}$ are added to the corresponding token embeddings to form the input embeddings $\bm{z_1}, \bm{z_2}, \ldots, \bm{z_n}$. 
For a batch of $b$ sequences padded to a
common length $n$, stacking yields
$\mathbf{Z} \in \mathbb{R}^{b \times n \times d}$. The total number of tokens
processed in the batch is $t = b n$. 
%Stacking these embeddings for a batch of $b$ sequences yields the input matrix $\bm{Z} \in \mathbb{R}^{b \times n \times d}$, where $b$ is the batch size (number of sequences in one batch). Let $t$ denote the total number of tokens in the batch $t=\sum^{b}_{j=1} n_j$. 

%We use the flattened matrix $\tilde{Z} \in \mathbb{R}^{t\times d}$ formed by stacking all token embeddings; subsequent gradient expressions and rank bounds are stated in terms of $\tilde{Z}$.}

The stacked embedding $\bm{Z}$ is then passed through multiple layers of self-attention. We denote the input to the $k$-th self-attention layer as $\bm{Z_k} \in \mathbb{R}^{b \times n \times d}$ for $1 \leq k \leq L$, where $L$ is the number of transformer blocks. In each block, the self-attention mechanism involves computing three linear projections of the input embeddings to form the queries $\bm{Q} = \bm{Z_k W_k^Q}$, keys $\bm{K} = \bm{Z_k W_k^K}$, and values $\bm{V} = \bm{Z_k W_k^V}$. The attention scores are then computed as:
\begin{equation}
attention(\bm{Q}, \bm{K}, \bm{V}) = softmax\left(\bm{M} \odot \frac{\bm{QK^T}}{\sqrt{d}}\right)\bm{V},
\end{equation}
where $\bm{M}$ is the binary self-attention mask, $\odot$ denotes the element-wise multiplication, and the softmax operation is performed independently across each row.

\subsection{Parameter Efficient Fine-Tuning}
Due to the enormous size of LLMs, many recent studies have proposed using parameter-efficient fine-tuning (PEFT) techniques, such as Low-Rank Adaptation (LoRA)~\cite{hu2021lora} and Adapters~\cite{houlsby2019parameter}, to fine-tune these models for specific tasks without the need to retrain the entire network.  Specifically, LoRA introduces low-rank matrices into the transformer architecture, allowing only these low-rank components to be updated during fine-tuning, thereby reducing the number of parameters that need to be trained. The process begins with the pre-trained model weights, $\bm{W} \in \mathbb{R}^{d_{in} \times d_{out}}$, where $d_{in}$ is the input dimension and $d_{out}$ is the output dimension. LoRA introduces two low-rank matrices, $\bm{A} \in \mathbb{R}^{d_{in} \times r}$ and $\bm{B} \in \mathbb{R}^{r \times d_{out}}$, where $r \ll \min(d_{in}, d_{out})$, such that the weight update is approximated as $\Delta \bm{W} = \bm{AB}$. During fine-tuning, only the matrices $\bm{A}$ and $\bm{B}$ are updated, leaving the original model weights $\bm{W}$ unchanged. This approach significantly reduces the number of parameters that need to be optimized. However, in FL, where data distribution is highly non-IID, LoRA can experience slower convergence rates due to the initialization process of LoRA blocks~\cite{babakniya2023slora}. To address this, several variants such as SLoRA~\cite{babakniya2023slora}, FLoCoRA~\cite{ribeiro2024flocora}, and HetLoRA~\cite{cho2024heterogeneous} have been developed to better suit the FL environment, maintaining the foundational technique of LoRA.

Additionally, Adapters~\cite{houlsby2019parameter} offers a flexible and efficient approach for fine-tuning LLMs by inserting small, trainable modules into the transformer architecture. Adapters aim to reduce computational complexity while maintaining their performance by down-projecting the output of a specific layer, such as the MLP layer $\bm{H_o}$, to a lower dimension using a projection matrix $\bm{W_{\text{down}}}$. This is followed by applying a non-linear activation function and then up-projecting back to the original dimension using another projection matrix $\bm{W_{\text{up}}}$. Mathematically, $\bm{H_o}' = \bm{H_o} + f(\bm{H_o} \bm{W_{\text{down}}}) \bm{W_{\text{up}}},$ where $f(\cdot)$ represents a non-linear function such as ReLU. During fine-tuning, only the parameters of the adapter layers ($\bm{W_{\text{down}}}$ and $\bm{W_{\text{up}}}$) are updated, while the original model weights remain unchanged. However, adapters in FL face challenges due to the large configuration space, impacting training overhead and model convergence delays. To address these challenges, FL-specific adapters, such as FedAdapter~\cite{cai2022fedadapter}, and Feddat~\cite{chen2024feddat}, have been developed, enhancing adapters' efficiency and effectiveness in FL settings.

\begin{figure*}[th]
    \centering
    \includegraphics[width=0.67\linewidth]{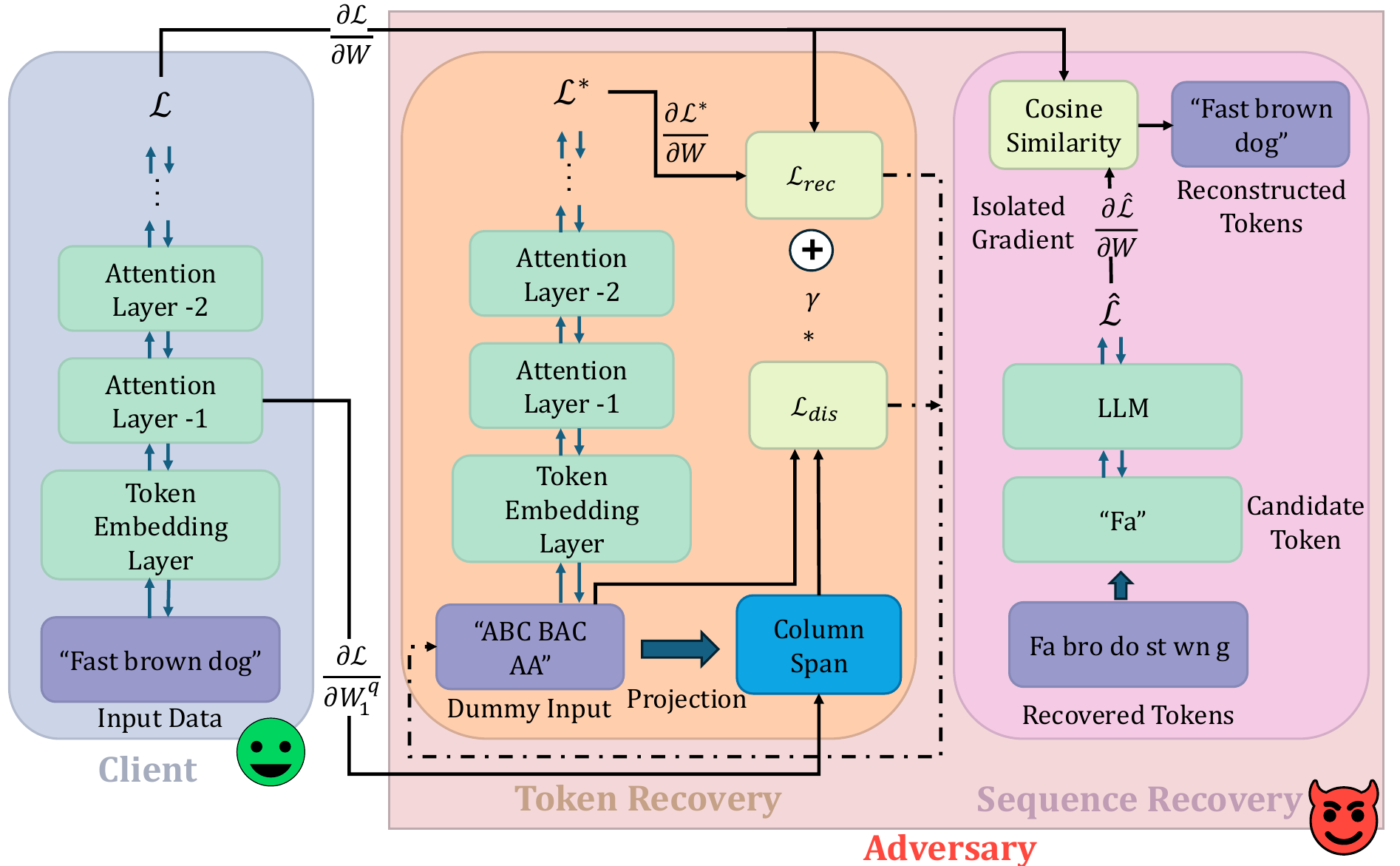}
    \caption{Overview of \methodName. \methodName enables an adversary to reconstruct client training data by initializing a dummy input and iteratively updating it to match the client’s gradient. To reduce the search space, the server projects candidate tokens onto the gradient’s column space and recovers the correct sequence by comparing individual token gradients iteratively.}
    \vspace{-2mm}
    \label{fig: overview}
\end{figure*}

\subsection{Gradient Rank Deficiency and Subspaces}\label{sec: rank}
\begin{theorem}\label{theorem:1}
As the attention layers of the transformer are linear layers (e.g., $\bm{Y} = \bm{ZW} + \bm{B}$, where $\bm{Y}$ represents the output, $\bm{Z}$ denotes the input, $\bm{W}$ is the layer's weight matrix, and $\bm{B}$ is the bias term), the gradients of the loss $\mathcal{L}$ w.r.t $\bm{W}$ for a linear layer can be expressed as:
\begin{equation}\label{eq: gradient}
    \frac{\partial \mathcal{L}}{\partial \bm{W}} = \bm{Z^T}\frac{\partial \mathcal{L}}{\partial \bm{Y}}.
\end{equation}
\end{theorem}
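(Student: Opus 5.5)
The plan is a direct chain-rule computation in index notation for the linear map $\bm{Y} = \bm{ZW} + \bm{B}$. Here $\bm{Z} \in \mathbb{R}^{t \times d_{in}}$ is the flattened input: the $b \times n$ token positions are stacked into $t = bn$ rows, as in the definition of $t$ above. Likewise $\bm{W} \in \mathbb{R}^{d_{in} \times d_{out}}$ and $\bm{Y}, \bm{B} \in \mathbb{R}^{t \times d_{out}}$. The bias may be a broadcast row vector; this does not matter, since $\bm{B}$ does not depend on $\bm{W}$.

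First, write entrywise $Y_{ij} = \sum_{k} Z_{ik} W_{kj} + B_{ij}$. This gives $\partial Y_{ij}/\partial W_{pq} = Z_{ip}\,\delta_{jq}$, because only the $k=p$ term involves $W_{pq}$ and only in column $j=q$.

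Second, view the scalar loss $\mathcal{L}$ as a function of $\bm{W}$ only through $\bm{Y}$. Everything downstream (softmax attention, MLPs, the next-token loss in Eq.~\eqref{eq: next-token}) is treated as a black box producing the upstream gradient $G = \partial\mathcal{L}/\partial \bm{Y}$. The multivariate chain rule then gives
\[
\frac{\partial \mathcal{L}}{\partial W_{pq}} = \sum_{i,j} G_{ij}\, Z_{ip}\,\delta_{jq} = \sum_i Z_{ip} G_{iq} = (\bm{Z}^T G)_{pq}.
\]
Assembling the entries yields Eq.~\eqref{eq: gradient}.

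The only point needing care is how the statement's phrase ``attention layers are linear'' should be read, and I expect this interpretive step to be the main obstacle. Attention itself is not linear in $\bm{Z_k}$. What is linear is each projection $\bm{Q}=\bm{Z_kW_k^Q}$, $\bm{K}=\bm{Z_kW_k^K}$, $\bm{V}=\bm{Z_kW_k^V}$ as a function of its weight. I would therefore apply the computation separately to each projection weight, taking $\bm{Y}\in\{\bm{Q},\bm{K},\bm{V}\}$, and absorb all nonlinearity into $\partial\mathcal{L}/\partial\bm{Y}$.

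One subtlety is weight sharing across the batch and positions. If one keeps the 3D shape $b\times n\times d$, the gradient is a sum over batch and position indices of outer products $\bm{z}_i^T g_i$, which is exactly the flattened product $\bm{Z}^T G$. I would state this explicitly, since it is the form used afterwards to bound the rank by $t$.

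The same argument covers PEFT layers. For LoRA, $\bm{Y} = \bm{Z}(\bm{W}+\bm{AB})$, so treating $\bm{ZA}$ as the input to $\bm{B}$ gives $\partial\mathcal{L}/\partial\bm{B} = (\bm{ZA})^T\partial\mathcal{L}/\partial\bm{Y}$. For $\bm{A}$ one gets $\partial\mathcal{L}/\partial\bm{A} = \bm{Z}^T(\partial\mathcal{L}/\partial\bm{Y})\bm{B}^T$. Both are again left-multiplied by the input, so the linear form of Eq.~\eqref{eq: gradient} persists.
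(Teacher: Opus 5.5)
Your proposal is correct and follows essentially the same route as the paper: an entrywise chain-rule computation using $\partial Y_{ij}/\partial W_{pq} = Z_{ip}\delta_{jq}$. Your explicit contraction $\sum_{i,j} G_{ij} Z_{ip}\delta_{jq} = (\bm{Z}^T G)_{pq}$ is a cleaner version of the paper's step, which writes the chain rule as $\frac{\partial \mathcal{L}}{\partial \bm{Y}}\frac{\partial \bm{Y}}{\partial \bm{W}}$ and informally identifies the derivative tensor with $\bm{Z}^T$. Your remarks on flattening over batch and positions, on reading ``linear'' as referring to the $Q/K/V$ projections, and on LoRA go beyond what the paper proves.
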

\begin{proof} Provided in Appendix~\ref{sec: theorem}. \end{proof}

It is known that the gradients of a neural network tend to be low-rank when the batch size is smaller than the hidden dimension~\cite{cai2017deep, jaderberg2014speeding}, and this observation holds for LLM architectures as well. Let $d$ denote the embedding dimension, $t$ denote the total number of tokens in a batch, $h$ the hidden state dimension, and $\mathcal{L}$ the loss function. The gradient matrix $\frac{\partial \mathcal{L}}{\partial \bm{Y}}$ has dimensions $t \times h$. Given that the input embedding matrix $\bm{Z}$ has dimensions $t \times d$, according to Theorem~\ref{theorem:1}, the gradient of the attention weight matrix (key, value, or query) is computed as the product of $\bm{Z^T}$ and $\frac{\partial \mathcal{L}}{\partial \bm{Y}}$. The rank of this product is bounded by the smallest dimension involved in the multiplication. Therefore, the rank of the gradient of the attention matrix is at most $\min(t, d, h)$.

\begin{theorem}\label{theorem:2}
Under the assumption that $\frac{\partial \bm{\mathcal{L}}}{\partial \bm{W}}$ is rank-deficient, $\bm{Z^T}$ is a linear combination of the columns of $\frac{\partial {\mathcal{L}}}{\partial \bm{W}}$.
\end{theorem}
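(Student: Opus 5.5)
The plan is to read the statement in its natural precise form and prove it with a right-inverse argument built on Theorem~\ref{theorem:1}. The precise form is: under rank deficiency, i.e.\ $t<d$ with $t \le h$, every column of $\bm{Z^T}\in\mathbb{R}^{d\times t}$ lies in the column space of $\frac{\partial \mathcal{L}}{\partial \bm{W}}\in\mathbb{R}^{d\times h}$, and the two column spaces in fact coincide. Write $\bm{G}=\frac{\partial \mathcal{L}}{\partial \bm{W}}$ and $\bm{D}=\frac{\partial \mathcal{L}}{\partial \bm{Y}}\in\mathbb{R}^{t\times h}$. Theorem~\ref{theorem:1} gives the factorization $\bm{G}=\bm{Z^T}\bm{D}$, and everything follows from this identity.

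\textbf{Step 1 (easy inclusion).} From $\bm{G}=\bm{Z^T}\bm{D}$, the $j$-th column of $\bm{G}$ is $\bm{Z^T}\bm{D}_{:,j}$. So $\mathrm{col}(\bm{G})\subseteq \mathrm{col}(\bm{Z^T})$, and $\mathrm{rank}(\bm{G})\le t<d$. This is exactly the rank deficiency discussed after Theorem~\ref{theorem:1}, with the bound $\min(t,d,h)$.

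\textbf{Step 2 (reverse inclusion, the actual claim).} Assume $\bm{D}$ has full row rank $t$, which requires $t\le h$. Then $\bm{D}\bm{D}^T$ is invertible, and $\bm{R}=\bm{D}^T(\bm{D}\bm{D}^T)^{-1}\in\mathbb{R}^{h\times t}$ is a right inverse, $\bm{D}\bm{R}=\bm{I}_t$. Multiplying the factorization on the right gives
\begin{equation*}
\bm{G}\bm{R}=\bm{Z^T}\bm{D}\bm{R}=\bm{Z^T}.
\end{equation*}
Hence each column $\bm{z}_i$ of $\bm{Z^T}$ equals $\sum_j R_{ji}\,\bm{G}_{:,j}$, a linear combination of the columns of $\bm{G}$. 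Combined with Step 1, $\mathrm{col}(\bm{G})=\mathrm{col}(\bm{Z^T})$, so the input embeddings lie in the low-dimensional subspace spanned by the gradient. This is the fact the attack later exploits by projecting candidate tokens onto that span.

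\textbf{Main obstacle.} The hard part is justifying that $\bm{D}$ has full row rank; without it the claim can fail. For example, if some token's row of $\bm{D}$ is zero, its embedding need not lie in $\mathrm{col}(\bm{G})$. I would state this as the precise meaning of the theorem's assumption, namely $t<d$ together with $\mathrm{rank}(\bm{D})=t\le h$. I would then argue it holds generically: the rows of $\bm{D}$ are backpropagated signals through nonlinear softmax and MLP paths, so the set of inputs and weights making them linearly dependent is a measure-zero algebraic subset. This matches the usual assumption in DAGER-style analyses.

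If $\bm{D}$ is rank-deficient, I would weaken the conclusion to $\mathrm{col}(\bm{G})=\bm{Z^T}\,\mathrm{col}(\bm{D})$. Under this weaker form, only the components of the embeddings carrying nonzero loss signal are recovered. I would remark that this is the case the null-space regularization is meant to address.
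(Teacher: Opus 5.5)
Your proof is correct, and it differs from the paper's in a substantive way. The paper takes a QR factorization $\bm{G}=\bm{QR}$, substitutes it into $\bm{QR}=\bm{Z}^T\bm{D}$, and ``applies the pseudoinverse'' to write $\bm{Q}$ as $\bm{Z}^T$ times a factor. From this it concludes that $\bm{Z}^T$ lies in $\mathrm{col}(\bm{Q})=\mathrm{col}(\bm{G})$. Any valid version of that manipulation (e.g.\ $\bm{Q}_1=\bm{G}\bm{R}_1^\dagger=\bm{Z}^T\bm{D}\bm{R}_1^\dagger$ for a thin factorization with full-row-rank $\bm{R}_1$) only expresses the columns of $\bm{Q}$ through those of $\bm{Z}^T$. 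That reproves your Step~1 inclusion $\mathrm{col}(\bm{G})\subseteq\mathrm{col}(\bm{Z}^T)$. The direction the theorem actually asserts is never derived.

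Your right inverse $\bm{D}^T(\bm{D}\bm{D}^T)^{-1}$ supplies exactly that missing direction. You are also right that some hypothesis beyond rank deficiency of $\bm{G}$ is unavoidable: $\bm{D}=\bm{0}$ makes $\bm{G}$ rank-deficient, yet the conclusion fails. What the paper's QR framing buys is the direct link to $\mathcal{L}_{dis}$, whose $\bm{Q}$ is the orthonormal factor of $\bm{G}$. Your equality $\mathrm{col}(\bm{G})=\mathrm{col}(\bm{Z}^T)$ is precisely what guarantees that an orthonormal basis of $\mathrm{col}(\bm{G})$, such as the leading columns of a rank-revealing QR, spans the true embeddings. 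Hence the projection residual vanishes on them.

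One caution: state $\mathrm{rank}(\bm{D})=t$ as an explicit hypothesis rather than leaning on the genericity argument, because there are structural, not measure-zero, failures. In standard causal attention, each sequence's first query enters only a softmax over a single score, which is identically $1$. So its row of $\partial\mathcal{L}/\partial\bm{Y}$ for $\bm{W}^q_1$ is exactly zero. Padded positions whose outputs the loss ignores likewise give zero rows. In those cases, your weaker conclusion $\mathrm{col}(\bm{G})=\bm{Z}^T\,\mathrm{col}(\bm{D})$ is the right statement.
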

\begin{proof} Provided in Appendix~\ref{sec: theorem}. \end{proof}
\vspace{-1mm}
In most training scenarios, the gradient matrix  $\frac{\partial \mathcal{L}}{\partial \bm{W}}$ is rank-deficient, with a rank at most $\min(t, d, h)$. 
According to Theorem~\ref{theorem:2}, the embedding vectors forming $\bm{Z^T}$ lie in the column span of $\frac{\partial \mathcal{L}}{\partial \bm{W}}$.
The adversary can exploit this to guide the optimization search by checking if the dummy/reconstructed embeddings lie within the span of the gradient.
This can be achieved by decomposing the matrix to separate it into simpler components, projecting the embedding vector onto the column space of the decomposed matrix, and computing the distance between the original vector and its projection. A small residual distance indicates the vector is within the span of the matrix.

It is important to note that PEFT gradients are typically low-rank, either by explicit design (e.g., LoRA) or due to the small size of the additional parameters (e.g., Adapters). Since the additional layers introduced by the PEFT methods are linear layers, Theorem~\ref{theorem:2} holds for PEFT methods as well. With PEFT methods, the effective rank is bounded by the PEFT matrix rank $r\ll d$, so rank $\frac{\partial \mathcal{L}}{\partial W}\leq min (t,d,h,r)$.

\section{Threat Model}
In this work, we consider an adversary that is either an honest-but-curious server or a malicious analyst capable of eavesdropping on the communication channel during the federated learning (FL) process. The adversary has access to both the global model shared by the server and the gradients uploaded by the clients during each communication round. We follow the standard white-box threat model adopted in prior gradient inversion and data reconstruction attacks~\cite{dengtaggradient, petrov2024dager, dimitrov2022lamp}, in which the adversary observes per-client gradients prior to aggregation; cryptographically secure aggregation mechanisms that hide individual client updates are not considered. The adversary aims to exploit this information to reconstruct the underlying training data and its contextual information.

\textbf{Adversarial Goal:} The adversary’s goal is to reconstruct private training data from gradients shared by clients. Unlike traditional gradient inversion attacks that rely on gradient leakage for token recovery, the adversary aims to recover richer contextual information from gradients to better approximate the training data while working within the constraints imposed by the FL process (e.g., frozen embedding layers).

\textbf{Adversary's Capabilities:} The adversary is assumed to have full access to the global model structure and parameters at each communication round. Additionally, the adversary can observe the gradients shared by clients but has no access to the clients’ local datasets or the aggregation mechanisms used by the server. The adversary is passive, it does not interfere with the training process, alter the global model distributed to clients, or tamper with the server-side aggregation. To further limit the adversary's capabilities, we adopt a standard FL fine-tuning practice by freezing the gradients of token and positional embeddings in federated LLMs~\cite{dimitrov2022lamp, petrov2024dager}. This mitigates the leakage of token-specific gradient updates that would otherwise simplify data reconstruction attacks. 

%Furthermore, we assume that if parameter-efficient fine-tuning (PEFT) is employed in the FL process, the adversary has knowledge of the specific PEFT method being used~\cite{tinn2023fine}. This is a reasonable assumption since PEFT strategies are executed on the client side, and knowledge of the method is necessary for local training, making it accessible to any adversarial client.
%When PEFT is used in the FL pipeline, we assume the adversary knows the presence of PEFT. This assumption aligns with prior federated PEFT works such as FedQLoRA~\cite{hu2025fedqlora}, FDLoRA~\cite{qi2024fdlora}, and FedP²EFT~\cite{lee2025fedp}, where either (i) the server defines and distributes the adapter structure to all clients, or (ii) clients are required to upload their adapter configurations before training begins. \textcolor{red}{In both cases, the PEFT configuration is shared across participants and is therefore observable to the adversary. Moreover, specific knowledge of PEFT configurations does not give the adversary an advantage.}
When PEFT is used in the FL pipeline, we assume the adversary knows \emph{whether} a PEFT method is adopted, but does not require knowledge of the specific method or its configuration. This assumption is reasonable because PEFT is executed client-side, and prior federated-PEFT studies (e.g., FedQLoRA~\cite{hu2025fedqlora}, FDLoRA~\cite{qi2024fdlora}, FedP$^{2}$EFT~\cite{lee2025fedp}) often (i) have the server define and distribute adapter structures to clients or (ii) ask clients to upload adapter configurations before training, practices that can make configurations visible to participants. Our threat model, however, does not rely on such configuration visibility; it only assumes adversarial awareness of PEFT adoption, not the method or its settings.

Our threat model considers federated LLM fine-tuning scenarios in which clients share gradient updates during training, regardless of whether the fine-tuning updates all model parameters or only a subset of parameters via PEFT methods. Accordingly, \methodName applies to both full-parameter fine-tuning and parameter-efficient fine-tuning, as both settings expose attention-layer gradients that may leak training data.

\section{Design of \methodName}
\subsection{Attack Overview}
\methodName is a scalable and generalizable data reconstruction attack designed for LLMs that effectively reconstructs training data under challenges such as larger batch sizes, longer sequences, and PEFT-induced gradient sparsity. Our approach leverages gradient subspace guidance to optimize the recovery process by utilizing the inherent low-rank structure of gradients. This method confines input embeddings to a low-dimensional subspace formed by gradient vectors, significantly improving convergence and enabling efficient reconstruction even with increased batch sizes and context lengths, addressing the challenge of scalability (for \circled{1}). 

To tackle the sparsity introduced by PEFT methods, \methodName employs null space regularization. This technique identifies directions in the gradient null space that do not influence the loss and applies a penalty to these directions, ensuring embeddings remain close to their true values. This approach mitigates the noise and ambiguity caused by sparse gradients, improving the reconstruction process in PEFT-based federated learning (for \circled{2}).

By focusing on the linear structure of gradients, which is consistent across encoder-based, decoder-based, and encoder-decoder architectures, \methodName achieves true architecture-agnosticism. It avoids reliance on network-specific features, such as cross-attention mechanisms, and projects optimization onto the most relevant gradient directions, thereby addressing the complexities posed by diverse model architectures (for \circled{3}). Furthermore, the attack employs gradient subspace guidance to streamline optimization, controlling combinatorial growth in the search space and mitigating excessive computational costs, effectively addressing time complexity concerns (for \circled{4}). To enhance the accuracy of token sequence recovery, \methodName introduces a Sequence Order Calibration phase. This phase iteratively aligns tokens by comparing isolated token gradients to full-sequence gradients, leveraging positional and contextual dependencies. By systematically fixing tokens in positions based on gradient alignment, this phase ensures a more accurate reconstruction of the original token sequence, addressing the challenge of sequence order (for \circled{5}).

Figure~\ref{fig: overview} illustrates the workflow of~\methodName, which comprises two main stages: Token Recovery and Sequence Recovery. In the Token Recovery phase, the adversary initializes a dummy input and iteratively updates it to minimize the gradient reconstruction loss ($\mathcal{L}_{rec}$), which computes the distance between the gradients of the dummy input and the actual gradients provided by the client (i.e., matching $\frac{\partial \mathcal{L}^*}{\partial \bm{W}}$ with $\frac{\partial \mathcal{L}}{\partial \bm{W}}$).
To reduce the high-dimensional search space, dummy token embeddings are projected onto the column span of the gradient matrix from the first attention layer ($\frac{\partial\mathcal {L}}{\partial \bm{W_1^q}}$), since this layer directly processes the input embeddings. This projection ensures that the recovered embeddings align with informative directions for reconstruction, guided by a distance loss ($\mathcal{L}_{dis}$). When a PEFT method is applied, the gradients become sparse, i.e., only a small subset of parameters receives non-zero updates, making it more difficult to identify informative directions. To address this, \methodName introduces a null space regularization term that penalizes updates in directions that do not affect the loss, thereby keeping the recovered embeddings within the gradient-contributing subspace.

In the Sequence Recovery phase, \methodName addresses the challenge that the recovered tokens from the Token Recovery phase are unordered and lack positional context. Although these tokens closely match the true embeddings, their order within the original sequence remains unknown. To recover the correct order of these tokens, \methodName employs a sequential order calibration strategy. Specifically, it evaluates each token candidate’s gradient alignment by computing the cosine similarity between its isolated gradient ($\frac{\partial \hat{\mathcal{L}}}{\partial \bm{W}}$), that is, the gradient when the token is placed in a specific position within a partially constructed sequence, and the original full-sequence gradient ($\frac{\partial \mathcal{L}}{\partial \bm{W}}$) received from the client. The token that yields the highest alignment score for each position in the sequence is selected and fixed at that position. This process is repeated iteratively for the remaining positions, progressively reconstructing the full sequence in the correct order while leveraging both positional and contextual dependencies embedded in the gradient.

\subsection{Token Recovery}
In \methodName, we aim to obtain the private data of a client using the shared gradient updates $\nabla_{\theta} f(x_i, y_i)$ on samples $(x_i, y_i)$ from their dataset $(\bm{X}, \bm{Y})$. We follow the gradient matching procedure by updating a set of dummy data $(x_i^*, y_i^*)$ to match the shared gradients. Thus, the optimization part of the attack solves Equation~\ref{eq: matching}. In the case of a classification task, the labels can be recovered for gradient and network architectures (details in section~\ref{sec:label}). For next-word prediction tasks, where labels are implicit, we apply the same objective by treating the token sequences as inputs and minimizing the gradient difference, as described in Equation~\ref{eq: next-token}.

\textbf{Reconstruction Loss.} Common choices for $\mathcal{L}_{rec}$ include $L_2$~\cite{zhao2020idlg}, $L_1$~\cite{dengtaggradient}, and cosine distances~\cite{dimitrov2022lamp}. However, relying solely on these metrics can be suboptimal: $L_1$ norm captures absolute differences but misses gradient direction; $L_2$ norm is sensitive to outliers and computationally expensive; cosine similarity ignores magnitude differences, missing full data variations. To address these issues, we propose a weighted layer-wise cosine similarity loss, which combines directional and magnitude information.
Specifically, we compute cosine similarity at each layer and weight it by the $L_1$ norm of that layer's gradient, allowing loss to emphasize more informative layers dynamically. The resulting loss is defined as:
\begin{align}
\mathcal{L}_{rec} = & \ 1 - \frac{1}{l} \sum_{j=1}^{l} \frac{\nabla_{\theta_{j}} f(x_i, y_i) \cdot \nabla_{\theta_{j}} f(x_i^*, y_i^*)}{\|\nabla_{\theta_{j}} f(x_i, y_i)\|_2 \|\nabla_{\theta_{j}} f(x_i^*, y_i^*)\|_2} \nonumber \\
& \qquad\qquad\qquad\qquad\quad \cdot \|\nabla_{\theta_{j}} f(x_i, y_i)\|_1,
\end{align}    
where $j$ is the layer index of the gradient and $l$ is the total number of layers. 

\textbf{Embedding Regularization.} As the batch size increases, it becomes challenging to extract data from the gradient due to the dilution of the gradient signal. To enhance the scalability of data reconstruction attacks when handling larger batch sizes and longer input sequences, we leverage the observation that gradients are rank-deficient when the batch size is smaller than the hidden dimension. 
According to Theorem~\ref{theorem:2}, in such cases, the input embeddings lie within the column span of the gradient matrix $\frac{\partial \mathcal{L}}{\partial \bm{W}}$. This insight allows us to assess whether the optimized dummy embedding $x_i^*$ lies in the same subspace. 
Since only the first transformer layer handles the input embedding, 
we focus on the gradients of its query matrix, $\frac{\partial \mathcal{L}}{\partial \bm{W^q_{1}}}$.
We project the dummy embedding $x_i^*$ onto the column span of this gradient. 
If $x_i^*$ lies entirely within the subspace spanned by the gradient matrix, the projection will be equal to $x_i^*$. This is because its component in the orthogonal complement of the subspace is zero, meaning it has no component in the orthogonal direction outside the subspace.
To enforce this constraint, we define a projection distance loss:
\begin{equation}
    \mathcal{L}_{dis} = ||\bm{Q}(\bm{Q^T} \cdot x_i^*) - x_i^*||_2,
\end{equation}
where $\bm{Q}$ is an orthogonal matrix obtained from the QR decomposition of the matrix $\frac{\partial \mathcal{L}}{\partial \bm{W^q_{1}}}$. We use the distance between $x_i^*$ and its projection $\bm{Q}(\bm{Q^T} \cdot x_i^*)$ as a regularization term in the optimization, penalizing $x_i^*$ that deviates from the column span of $\frac{\partial \mathcal{L}}{\partial \bm{W^q_{1}}}$. 

\textbf{Optimization-based Reconstruction.}
By incorporating the regularization term, the total loss function becomes:
\begin{equation}
\begin{split}
    \mathcal{L}_{\methodName} = \mathcal{L}_{rec} + \gamma \mathcal{L}_{dis} \quad v_{min} \leq x_i^* \leq v_{max}, \\(\text{element-wise}),
\end{split}
\end{equation}
where $\gamma$ is a weighting factor. The equation becomes purely gradient matching if $\gamma$ is $0$. During the optimization of the \methodName loss, we observe that the resulting embedding vectors $x_i^*$ often steadily grow in value. We believe this behavior results from the optimization process focusing on optimizing the direction of individual embeddings $x_i^*$ while disregarding their magnitude. To address this, we introduce a constraint that bounds the embedding values within a range $(v_{min}, v_{max})$, where these bounds are vectors containing the minimum and maximum values observed across each dimension of the embedding table. This ensures the reconstructed embeddings stay within a plausible range consistent with the original model’s vocabulary embeddings.

\subsection{Token Recovery under PEFT}
In parameter-efficient fine-tuning (PEFT) methods like LoRA and Adapters, the gradient matrix $\frac{\partial \mathcal{L}}{\partial \bm{W^q_1}}$ inherently has a low rank $r \ll d$, where $r$ is the rank of the matrix and $d$ is the embedding size. 
The column space of the gradient matrix, spanning $r$-dimensions, represents the constrained directions that influence the loss. In contrast, the null space $\mathcal{N} \subseteq \mathbb{R}^d$, which spans the remaining $d-r$ dimensions, contains directions that do not affect the gradient. Any embedding vector that differs from the “true” embedding $x$ in that null space direction will not be penalized by the first-order projection alone. The large null space introduced by low-rank constraints in PEFT methods complicates sequence reconstruction tasks, as the null space allows for many plausible sequences that produce similar gradients, creating significant ambiguity.

To solve this issue, we use a null space regularization method by providing additional constraints through projection-based penalties. Specifically, we compute the null space of the gradient matrix to identify directions that do not affect the loss and introduce a penalty term to constrain these directions. The null space projection ensures that the reconstructed embedding $x_i^*$ does not deviate significantly in unconstrained directions, reducing ambiguity and aligning it with the true embedding $x_i$. The reconstruction objective is augmented to include a null space penalty term, ensuring consistency across constrained and unconstrained directions. The updated reconstruction loss is defined as: 

\begin{equation}
\begin{split}
    \mathcal{L}_{\methodName} = \mathcal{L}_{rec} + \gamma \mathcal{L}_{dis} + \eta \|P_{\mathcal{N}} x_i^* \|^2,\\
    \text{s.t.} \; v_{min} \leq x_i^* \leq v_{max},
\end{split}
\end{equation}

where $P_{\mathcal{N}} = V_N V_N^T$ is the projection matrix onto the null space, and $V_N$ contains the basis vectors spanning the null space. These basis vectors are the columns of the orthogonal matrix formed from the right singular vectors corresponding to zero singular values ($\sigma_i = 0$). The orthogonal matrix is obtained by performing singular value decomposition (SVD) on the gradient matrix. 

%\textcolor{red}{Moreover, even if an adversary only knows that some form of PEFT has been applied, without access to the exact adapter configuration, \methodName can still execute the full inversion attack. The recovery strategy hinges on null-space regularization to constrain projection penalties, which is inherently agnostic to the adapter design. Consequently, the attack does not require PEFT-specific details and remains effective regardless of the particular fine-tuning variant adopted.}
%\jian{Do we have any results to support this claim?}

\subsection{Sequence Order Calibration}
Using the proposed optimization method, we can reconstruct the tokens present in the training data that are reflected in the gradients. However, since the gradients of the embedding layer and the positional encoding layer are frozen, the recovered tokens may not be in the correct sequence. For example, the original sequence might be ``\textit{The dog bit the man}," but the reconstruction could be ``\textit{The man bit the dog}." Although both sequences may yield similar perplexity losses due to the same words and grammatical correctness, their meanings differ significantly. To resolve this, we apply the intuition that the alignment between gradients for partial and full sequences plays a critical role in ensuring accurate recovery of the original training data. The gradient of a correctly placed partial sequence is more closely aligned with the full-sequence gradient than the gradient of an incorrectly placed partial sequence. Intuitively, this is because the gradients for the ``correct'' tokens contribute consistently to the overall objective, whereas misordered tokens introduce deviations that disrupt this alignment.
%while substituting the ``wrong'' tokens introduces deviations that disrupt this alignment.

\begin{lemma} \label{lemma:}
Let $\{\mathbf{g}_p\}_{p=1}^n$ be a sequence of vectors in $\mathbb{R}^d$. Let the total sum be $S \;=\; \sum_{p=1}^n \mathbf{g}_p$. For any \( k < n \), define the partial sum $S_{\le k} \;=\; \sum_{p=1}^k \mathbf{g}_p$. 
Then,
\[
S_{\le k} \cdot S
\;=\;
\bigl\|S_{\le k}\bigr\|^2
\;+\;
S_{\le k} \,\cdot\, \sum_{p=k+1}^n \mathbf{g}_p.
\]
In particular, if the remaining vectors \( \{\mathbf{g}_p\}_{p=k+1}^n \) are not strongly negatively correlated with \( S_{\le k} \), then the dot product \( S_{\le k} \cdot S \) will remain large.
\end{lemma}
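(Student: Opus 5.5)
The identity follows directly from splitting the total sum and using bilinearity of the dot product. The plan is to write $S$ as the partial sum plus its complement and expand; no earlier result in the paper is needed.

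\textbf{Step 1.} Since $k<n$, the index set $\{1,\dots,n\}$ splits into $\{1,\dots,k\}$ and $\{k+1,\dots,n\}$, the latter nonempty. Define the tail $R_{>k} := \sum_{p=k+1}^n \mathbf{g}_p$, so that $S = S_{\le k} + R_{>k}$.

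\textbf{Step 2.} Take the dot product with $S_{\le k}$ and expand using linearity in the second argument:
\[
S_{\le k}\cdot S = S_{\le k}\cdot S_{\le k} + S_{\le k}\cdot R_{>k} = \bigl\|S_{\le k}\bigr\|^2 + S_{\le k}\cdot \sum_{p=k+1}^n \mathbf{g}_p .
\]
This is exactly the stated identity.

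\textbf{Step 3 (qualitative claim).} The step needing care is the informal "in particular" statement, because "not strongly negatively correlated" has no precise meaning yet. I would formalize it as an assumption $S_{\le k}\cdot R_{>k} \ge -\epsilon \|S_{\le k}\|^2$ for some $\epsilon\in[0,1)$. Under this assumption the identity gives
\[
S_{\le k}\cdot S \ge (1-\epsilon)\|S_{\le k}\|^2 ,
\]
so the alignment of the partial sum with the full sum stays bounded below by a positive multiple of $\|S_{\le k}\|^2$.

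Cauchy--Schwarz can then turn this into a cosine-similarity statement, which is the quantity the calibration phase actually uses:
\[
\cos\bigl(S_{\le k},S\bigr) \ge (1-\epsilon)\,\frac{\|S_{\le k}\|}{\|S\|}.
\]
This bound is meaningful whenever $S\neq 0$ and $S_{\le k}\neq 0$. Under the stronger assumption $S_{\le k}\cdot R_{>k}\ge 0$ (take $\epsilon=0$), it reads $S_{\le k}\cdot S\ge \|S_{\le k}\|^2$.
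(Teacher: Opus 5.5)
Your proof is correct and is essentially the paper's argument: both rest on bilinearity of the dot product over the split $\{1,\dots,k\}\cup\{k+1,\dots,n\}$. Your version is cleaner, since you group $S = S_{\le k} + R_{>k}$ before expanding, whereas the paper expands term by term in a double sum that reuses the index $p$ and then has to regroup the within-block cross terms back into $\|S_{\le k}\|^2$.

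Your formalization of the ``in particular'' clause via $S_{\le k}\cdot R_{>k}\ge -\epsilon\|S_{\le k}\|^2$ goes beyond the paper, which leaves that clause informal. Two small remarks on it: the cosine bound follows from dividing by $\|S_{\le k}\|\,\|S\|$, not from Cauchy--Schwarz, and $S\neq 0$ is automatic once $S_{\le k}\neq 0$ and $\epsilon<1$.
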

\begin{proof} Provided in Appendix~\ref{sec: theorem}. \end{proof}
The key intuition is that, in next-word prediction tasks, gradients are computed over sequentially structured inputs where each token depends on previous ones. As a result, individual gradients $ g_1, g_2, \dots, g_n $ (e.g., from each token position) are not arbitrary, reflecting shared contextual structure. For example, gradients from earlier tokens encode valid prefixes of the sequence. Therefore, the partial sum $ S_{\le k} = \sum_{p=1}^k g_p $ not only contributes to the total gradient $ S = \sum_{p=1}^n g_p $, but also tends to point in a similar direction. This alignment arises from partial and full sums composed of gradients shaped by overlapping context, resulting in directional consistency across sequence. Based on this, we formalize the following theorem.

\begin{theorem}\label{theorem}
Consider a full sequence $\textbf{x} = (x_1, x_2, \dots, x_n)$ and two partial sequences of length $k$: A correctly ordered partial sequence $x_{\le k}^{\text{(correct)}}, e.g., (x_1, x_2, \dots, x_k)$, and a misordered partial sequence $x_{\le k}^{\text{(wrong)}}$, e.g., $(x_1, x_3, \dots)$. Let 

\begin{equation}
    \nabla_\theta^{\text{full}}:==-\sum_{t=1}^n \nabla_\theta \log p_\theta\bigl(x_t \mid x_{<t}\bigr),
\end{equation}
\begin{equation}
    \nabla_\theta^{\text{correct}} := -\sum_{t=1}^k \nabla_\theta \log p_\theta\bigl(x_t \mid x_{<t}\bigr),
\end{equation}

\begin{equation}
\begin{split}
\nabla_\theta^{\text{wrong}}:= -\Bigl[\nabla_\theta \log p_\theta(x_1) + \nabla_\theta \log p_\theta(x_3 \mid x_1) + \dots \\
+ \nabla_\theta \log p_\theta(x_t \mid x_{t-1})\Bigr],
\end{split}
\end{equation}

Then,
\[ \nabla_\theta^{\text{correct}} \cdot \nabla_\theta^{\text{full}} \;>\; \nabla_\theta^{\text{wrong}} \cdot \nabla_\theta^{\text{correct}},
\]
indicating that the gradient for the correctly ordered partial sequence is more aligned with the gradient of the full sequence than the gradient for the misordered partial sequence.
\end{theorem}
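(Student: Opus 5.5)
Write $\mathbf{g}_t := -\nabla_\theta \log p_\theta(x_t \mid x_{<t})$ for the per-position gradient of the true sequence. Then $\nabla_\theta^{\text{correct}} = S_{\le k}$ and $\nabla_\theta^{\text{full}} = S$ in the notation of Lemma~\ref{lemma:}. The plan is to rewrite both sides of the claimed inequality in terms of $S_{\le k}$ and compare them term by term.

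\textbf{Left-hand side.} By Lemma~\ref{lemma:},
\[
\nabla_\theta^{\text{correct}}\cdot\nabla_\theta^{\text{full}} \;=\; \bigl\|S_{\le k}\bigr\|^2 + S_{\le k}\cdot \sum_{p=k+1}^n \mathbf{g}_p .
\]

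\textbf{Right-hand side.} I will decompose the misordered gradient as $\nabla_\theta^{\text{wrong}} = S_{\le k} + \Delta$. Here $\Delta$ collects the differences between the misordered conditional terms (for example $-\nabla_\theta\log p_\theta(x_3\mid x_1)$) and the correctly conditioned terms they replace. Positions that agree with the true prefix, such as $x_1$, cancel in $\Delta$. This gives
\[
\nabla_\theta^{\text{wrong}}\cdot\nabla_\theta^{\text{correct}} \;=\; \bigl\|S_{\le k}\bigr\|^2 + \Delta\cdot S_{\le k}.
\]

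\textbf{Reduction.} Subtracting the two expressions, the theorem is equivalent to
\[
S_{\le k}\cdot \sum_{p=k+1}^n \mathbf{g}_p \;>\; \Delta\cdot S_{\le k}.
\]
I would establish this under two hypotheses in the spirit of the lemma's ``in particular'' clause.

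\begin{enumerate}
\item The suffix gradients $\{\mathbf{g}_p\}_{p>k}$ are not negatively correlated with $S_{\le k}$. This is the shared-context assumption, and it makes the left-hand side nonnegative.
\item The misordering perturbation is strictly misaligned with the correct prefix gradient, i.e.\ $\Delta\cdot S_{\le k} < 0$, or at least $\Delta\cdot S_{\le k}$ is strictly smaller than the nonnegative suffix term.
\end{enumerate}

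To motivate the second hypothesis, I would linearize the log-likelihood around the true context. A wrong conditioning context shifts the model's predictive distribution, so each wrong-context term equals the correct-context term plus a component driven by the context mismatch. I would then argue that this mismatch component lies outside, or opposite to, the direction that increases the likelihood of the true prefix, since the true prefix maximizes its own conditional likelihood locally.

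\textbf{Main obstacle.} Step 2 is the hard part. Without distributional assumptions on the model, the sign of $\Delta\cdot S_{\le k}$ cannot be controlled in general, and a counterexample with adversarial gradients is easy to build. So the proof must either state this as an explicit assumption (misordered contexts induce gradient deviations non-positively correlated with the true prefix gradient) or derive it from a simplified model. A natural choice is a log-linear model in which the gradient of $\log p(x_t\mid c)$ is feature$(c)$ minus the expected feature; there the inequality reduces to comparing inner products of context features.

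I would take the explicit-assumption route and present the result as holding whenever the suffix term is nonnegative and the perturbation $\Delta$ is non-positively aligned with $S_{\le k}$, with at least one of the two strict. Strictness then comes from $\|S_{\le k}\|^2>0$ entering both sides identically, so only the cross terms matter.
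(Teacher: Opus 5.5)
Your proposal follows essentially the same route as the paper: apply Lemma~\ref{lemma:} to $\nabla_\theta^{\text{correct}}\cdot\nabla_\theta^{\text{full}}$, treat the misordered contribution (the paper's $\tilde{g}_2$ replacing $g_2$, your $\Delta$) as a perturbation of $S_{\le k}$, and assume rather than derive that this perturbation reduces alignment; your exact reduction to $S_{\le k}\cdot\sum_{p>k}\mathbf{g}_p > \Delta\cdot S_{\le k}$ states as an explicit hypothesis what the paper only asserts as ``typically''. Two small notes: the paper's proof actually ends with $\nabla_\theta^{\text{wrong}}\cdot\nabla_\theta^{\text{full}}$ on the right-hand side (matching the theorem's closing sentence and the calibration step), which in your notation is just $\Delta\cdot\nabla_\theta^{\text{full}}<0$; and $\|S_{\le k}\|^2$ cancels between the two sides, so strictness comes only from your requirement that one of the cross-term conditions be strict.
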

\begin{proof} Provided in Appendix~\ref{sec: theorem}. \end{proof}

Applying Theorem~\ref{theorem}, we leverage the alignment between token gradients and the observed full-sequence gradient to refine token order. Specifically, the model’s training gradient for the entire sequence serves as the reference to guide reordering. Note that if the model’s original task is classification, the adversary must first adapt it to a next-word prediction task to expose token-level gradients.

With the recovered token set $(x^*_1, x^*_2, x^*_3, \dots, x^*_n)$ from the Token Recovery stage, the adversary begins with an empty sequence and iteratively assigns tokens to positions. For the first position, the adversary places each token $x_i^*$ from the set in position 1, computes the gradient of the model’s loss with respect to its parameters, and compares the similarity with the observed gradient using cosine similarity. The token whose gradient aligns most closely with the observed gradient is selected for the position. This process is repeated for subsequent positions. At each step, the adversary evaluates the remaining tokens by placing each candidate in the current position, computing the gradient, and comparing it to the observed gradient. The token with the highest alignment is fixed in the position, and the sequence is incrementally constructed until all tokens are placed. This method leverages the alignment between token-specific gradients and the full-sequence gradient, ensuring that the reconstructed sequence captures the positional and contextual dependencies encoded in the model during training.

\subsection{Label Recovery for Classification Tasks}
\label{sec:label}
To recover the label in a classification task, we exploit the gradients of the classification layer, which maps the LLM's hidden representations to the binary output space. Specifically, we analyze the structure of shared gradients and their relationship to the hidden representations to infer the correct label.

For a batch of tokenized text sequences $\mathbf{X} = \{x_1, x_2, \ldots, x_b\}$ and their corresponding one-hot labels $\mathbf{Y} = \{y_1, y_2, \ldots, y_b\}$, where $y_k \in \{0, 1\}^2$ and batch size is $b$. For each input $x_k$, the binary classifier produces logits $\mathbf{z}_k \in \mathbb{R}^2$, which are subsequently transformed into class probabilities $\mathbf{p}_k = \text{softmax}(\mathbf{z}_k)$. The cross-entropy loss for an individual instance is defined as $\mathcal{L}(x_k, y_k) = - \sum_{c=1}^2 y_{k,c} \log(p_{k,c})$, where $p_{k,c}$ represents the predicted probability for class $c$. The gradient of the loss with respect to the logits, $\partial \mathcal{L}(x_k, y_k) / \partial z_{k,c}$, measures the discrepancy between the predicted and ground-truth probabilities and is computed as $p_{k,c} - y_{k,c}$.

The final fully connected layer of the LLM-based classifier maps the hidden representation $\mathbf{h}_k \in \mathbb{R}^M$ of input $x_k$ to logits $\mathbf{z}_k$ through a weight matrix $\mathbf{W}^{(FC)} \in \mathbb{R}^{M \times 2}$. The gradient of the loss with respect to $\mathbf{W}^{(FC)}$ is given by $\Delta \mathbf{W}^{(FC)}_{m,c,k} = (p_{k,c} - y_{k,c}) \cdot h_{k,m},$ where $h_{k,m}$ denotes the $m$-th feature of $\mathbf{h}_k$. To analyze the gradients at the batch level, the average gradient over all inputs in the batch are expressed as $\Delta \mathbf{W}^{(FC)}_{m,c} = \frac{1}{b} \sum_{k=1}^b \Delta \mathbf{W}^{(FC)}_{m,c,k}$.

To infer the labels, we aggregate the gradients over the feature dimension $m$, resulting in the per-sample contribution for each class $c$, $S_{k,c} = \sum_{m=1}^M \Delta \mathbf{W}^{(FC)}_{m,c,k} = (p_{k,c} - y_{k,c}) \cdot \|\mathbf{h}_k\|_1$, where $\|\mathbf{h}_k\|_1$ is the L1 norm of the hidden representation. The sign of $S_{k,c}$ indicates whether the gradient contribution supports or contradicts a given label, being non-positive for the true label and non-negative otherwise. Using this property, we estimate the likelihood of each label $c$ by evaluating the aggregated gradient contributions:
\begin{equation}
\sum_{k=1}^b y_{k,c} \approx \sum_{k=1}^b p_{k,c} - 
\frac{b \cdot S_c}{\overline{\|\mathbf{h}_k\|_1}},
\end{equation}
where $S_c = \frac{1}{b}\sum_{k=1}^b S_{k,c}$ is the average contribution of gradients for class $c$, and $\overline{\|\mathbf{h}_k\|_1} = \frac{1}{b}\sum_{k=1}^b \|\mathbf{h}_k\|_1$ is the average L1 norm of the hidden representations across the batch. This approach circumvents the need for explicit reconstruction of the input data and directly infers the label distribution. The binary nature of the classification task simplifies the inference process by reducing ambiguity in label estimation.

However, for next-word prediction tasks, generating explicit ground truth labels is inherently challenging because the input text sequence itself serves as the ground truth. Unlike traditional classification tasks, where labels are predefined and discrete, next-word prediction involves predicting a token in a continuous sequence, where the target output depends on the preceding context. Consequently, in this scenario, explicit label inference is not directly required. Instead, the subsequent token in the dummy input sequence is treated as the ground truth for gradient-based analysis or restoration.

\section{Evaluation}
\subsection{Experimental Setup}
In this work, we evaluate \methodName on Large Language Models (LLMs) encompassing both encoder and decoder transformer architectures.
%, using the FedSGD aggregation method. 
To ensure diversity in architectural designs and enable comprehensive comparisons with various baselines, we select four widely used LLMs: GPT-2~\cite{radford2019language}, BERT$_{Base}$\cite{devlin2018bert}, Llama-7B\cite{touvron2023llama}, and T5$_{base}$~\cite{raffel2020exploring}. These models represent all three primary transformer architectures—decoder-based, encoder-based, and encoder-decoder—allowing us to analyze performance across distinct design paradigms. Additionally, for testing on larger models, we extend our evaluation to Llama-3.1 across multiple sizes.

Our experiments focus on two key tasks commonly addressed by transformer models: binary classification and next-word prediction. For binary classification tasks like sentiment analysis, we use the CoLA dataset\cite{warstadt2019neural}, which consists of English sentences labeled for grammatical acceptability, and the Rotten Tomatoes dataset\cite{socher2013recursive}, which contains movie reviews annotated for positive or negative sentiment. For next-word predictions, we employ the MIMIC-III dataset~\cite{johnson2016mimic}, a large-scale clinical dataset containing electronic health records, where the task involves predicting the next token in a sequence of clinical notes. Each attack method was executed $15$ times, with results averaged across recommended iterations per run to reduce variability and ensure robust evaluation of \methodName across diverse models and tasks.

In our experiments, each dataset is randomly partitioned across $100$ clients. The server may choose to attack one or more of these clients, and \methodName is executed separately for each targeted client. For every run, we randomly sample an iteration at which to launch the attack and report results averaged over 10 runs. Within each run, reconstruction is performed batch-by-batch over the training data, and performance is computed per attacked batch before aggregating across runs. The hyperparameters $\gamma$ and $\eta$ were tuned via grid search. The details of the baseline implementation are provided in Appendix~\ref{app: baseline}.

\begin{table*}[t]
\centering
\caption{Comparison of sequence reconstruction from gradients between~\methodName and other baseline algorithms on various batch sizes and datasets with BERT$_{Base}$ target model. R-1, R-2, and R-L denote the ROUGE-1, ROUGE-2, and ROUGE-L scores, respectively. All results are reported as mean $\pm$ standard deviation over 10 independent runs.}
 \resizebox{\textwidth}{!}{
\begin{tabular}{ccccccccccccccccc}
\hline
     \multirow{2}{4em}{\textbf{Dataset}}&\multirow{2}{4em}{\textbf{Method}} &\multicolumn{3}{c}{$b$ = 1} & &\multicolumn{3}{c}{$b$ = 2} &&\multicolumn{3}{c}{$b$ = 4} &&\multicolumn{3}{c}{$b$ = 8}\\\cline{3-5} \cline{7-9} \cline{11-13} \cline{15-17}
     
     & &R-1 &R-2 &R-L &&R-1 &R-2 &R-L &&R-1 &R-2 &R-L &&R-1 &R-2 &R-L\\
     \hline
     \multirow{7}{4em}{CoLA}
& DLG &58.6$_{\pm1.2}$ &7.6$_{\pm0.4}$ &45.5$_{\pm1.1}$ &&36.2$_{\pm1.3}$ &2.5$_{\pm0.3}$ &30.9$_{\pm1.2}$ &&34.8$_{\pm1.5}$ &1.3$_{\pm0.2}$ &31.4$_{\pm1.4}$ &&16.1$_{\pm1.0}$ &0.7$_{\pm0.1}$ &7.6$_{\pm0.9}$ \\
& TAG &78.1$_{\pm1.0}$ &10.0$_{\pm0.5}$ &52.7$_{\pm1.1}$ &&44.9$_{\pm1.2}$ &4.5$_{\pm0.4}$ &36.2$_{\pm1.0}$ &&34.9$_{\pm1.3}$ &1.5$_{\pm0.2}$ &30.8$_{\pm1.1}$ &&32.8$_{\pm1.2}$ &1.5$_{\pm0.2}$ &29.9$_{\pm1.0}$ \\
& LAMP &83.9$_{\pm0.9}$ &45.5$_{\pm1.1}$ &72.4$_{\pm1.0}$ &&56.4$_{\pm1.0}$ &21.4$_{\pm0.9}$ &49.1$_{\pm0.9}$ &&39.8$_{\pm1.1}$ &6.2$_{\pm0.5}$ &35.6$_{\pm1.0}$ &&35.8$_{\pm1.1}$ &4.9$_{\pm0.4}$ &33.9$_{\pm0.9}$ \\
& APRIL &83.8$_{\pm0.8}$ &45.3$_{\pm1.0}$ &72.1$_{\pm0.9}$ &&56.2$_{\pm0.9}$ &21.5$_{\pm0.8}$ &48.9$_{\pm0.8}$ &&40.6$_{\pm1.0}$ &7.6$_{\pm0.6}$ &39.5$_{\pm1.0}$ &&36.6$_{\pm1.0}$ &5.0$_{\pm0.4}$ &33.8$_{\pm0.8}$ \\
& FILM &84.0$_{\pm0.9}$ &44.6$_{\pm1.0}$ &71.8$_{\pm1.0}$ &&56.6$_{\pm1.0}$ &18.9$_{\pm0.8}$ &49.2$_{\pm0.9}$ &&43.2$_{\pm1.1}$ &11.1$_{\pm0.7}$ &39.6$_{\pm1.1}$ &&37.1$_{\pm1.0}$ &5.7$_{\pm0.5}$ &34.2$_{\pm0.9}$ \\
& BGP &85.8$_{\pm0.7}$ &50.7$_{\pm1.0}$ &75.9$_{\pm0.8}$ &&68.7$_{\pm0.8}$ &30.6$_{\pm0.9}$ &59.9$_{\pm0.8}$ &&49.8$_{\pm1.0}$ &11.5$_{\pm0.6}$ &43.2$_{\pm0.9}$ &&40.1$_{\pm0.9}$ &8.0$_{\pm0.6}$ &37.6$_{\pm0.8}$ \\
& DAGER &\textbf{99.6$_{\pm0.2}$} &\textbf{99.3$_{\pm0.3}$} &\textbf{99.1$_{\pm0.3}$} &&\textbf{97.8$_{\pm0.4}$} &\textbf{97.1$_{\pm0.5}$} &\textbf{93.6$_{\pm0.6}$} &&89.6$_{\pm0.9}$ &84.9$_{\pm1.0}$ &81.5$_{\pm1.1}$ &&63.6$_{\pm1.3}$ &44.3$_{\pm1.4}$ &43.9$_{\pm1.3}$ \\
& \textbf{\methodName} &94.5$_{\pm0.5}$ &93.9$_{\pm0.6}$ &92.8$_{\pm0.6}$ &&93.7$_{\pm0.6}$ &92.1$_{\pm0.7}$ &90.1$_{\pm0.7}$ &&\textbf{91.9$_{\pm0.7}$} &\textbf{90.4$_{\pm0.8}$} &\textbf{87.2$_{\pm0.8}$} &&\textbf{79.8$_{\pm0.9}$} &\textbf{71.9$_{\pm1.0}$} &\textbf{64.7$_{\pm1.0}$} \\
\hline
    
    \multirow{7}{4em}{\centering Rotten Tomatoes}
& DLG &19.7$_{\pm1.6}$ &0.4$_{\pm0.3}$ &14.9$_{\pm1.5}$ &&18.5$_{\pm1.7}$ &0.6$_{\pm0.3}$ &15.0$_{\pm1.6}$ &&18.3$_{\pm1.8}$ &0.4$_{\pm0.3}$ &15.2$_{\pm1.7}$ &&19.6$_{\pm1.9}$ &0.3$_{\pm0.2}$ &16.5$_{\pm1.8}$ \\
& TAG &31.2$_{\pm1.5}$ &2.4$_{\pm0.5}$ &19.7$_{\pm1.4}$ &&26.4$_{\pm1.6}$ &1.0$_{\pm0.4}$ &18.7$_{\pm1.5}$ &&27.4$_{\pm1.7}$ &0.8$_{\pm0.4}$ &19.7$_{\pm1.6}$ &&22.1$_{\pm1.8}$ &0.7$_{\pm0.3}$ &18.1$_{\pm1.6}$ \\
& LAMP &62.7$_{\pm1.3}$ &13.4$_{\pm1.1}$ &42.0$_{\pm1.2}$ &&37.8$_{\pm1.4}$ &6.1$_{\pm0.8}$ &28.2$_{\pm1.3}$ &&24.1$_{\pm1.5}$ &2.2$_{\pm0.6}$ &19.6$_{\pm1.4}$ &&20.2$_{\pm1.6}$ &0.6$_{\pm0.4}$ &17.3$_{\pm1.4}$ \\
& APRIL &63.5$_{\pm1.3}$ &15.1$_{\pm1.2}$ &43.2$_{\pm1.2}$ &&38.2$_{\pm1.4}$ &5.6$_{\pm0.8}$ &28.4$_{\pm1.3}$ &&27.7$_{\pm1.5}$ &2.3$_{\pm0.6}$ &20.7$_{\pm1.4}$ &&21.9$_{\pm1.6}$ &1.0$_{\pm0.5}$ &18.4$_{\pm1.4}$ \\
& FILM &63.6$_{\pm1.4}$ &15.3$_{\pm1.2}$ &43.6$_{\pm1.3}$ &&39.6$_{\pm1.5}$ &5.2$_{\pm0.9}$ &28.3$_{\pm1.4}$ &&30.5$_{\pm1.6}$ &2.5$_{\pm0.7}$ &23.1$_{\pm1.5}$ &&22.3$_{\pm1.7}$ &1.2$_{\pm0.5}$ &18.3$_{\pm1.5}$ \\
& BGP &71.5$_{\pm1.2}$ &20.4$_{\pm1.3}$ &48.7$_{\pm1.2}$ &&43.9$_{\pm1.3}$ &6.7$_{\pm0.9}$ &31.2$_{\pm1.3}$ &&29.3$_{\pm1.4}$ &3.3$_{\pm0.8}$ &23.8$_{\pm1.3}$ &&23.1$_{\pm1.5}$ &1.6$_{\pm0.6}$ &19.3$_{\pm1.4}$ \\
& DAGER &\textbf{99.4$_{\pm0.4}$} &\textbf{99.1$_{\pm0.5}$} &\textbf{98.8$_{\pm0.5}$} &&\textbf{94.7$_{\pm0.7}$} &\textbf{92.2$_{\pm0.8}$} &\textbf{85.4$_{\pm0.9}$} &&63.6$_{\pm1.4}$ &42.1$_{\pm1.5}$ &30.7$_{\pm1.5}$ &&32.3$_{\pm1.6}$ &6.8$_{\pm0.9}$ &6.8$_{\pm0.9}$ \\
& \textbf{\methodName}
&91.6$_{\pm0.8}$ &82.6$_{\pm0.9}$ &81.9$_{\pm0.9}$ &&89.8$_{\pm0.9}$ &78.6$_{\pm1.0}$ &63.1$_{\pm1.0}$ &&\textbf{77.2$_{\pm1.0}$} &\textbf{63.6$_{\pm1.1}$} &\textbf{59.5$_{\pm1.1}$} &&\textbf{61.9$_{\pm1.1}$} &\textbf{41.6$_{\pm1.2}$} &\textbf{57.6$_{\pm1.2}$} \\
\hline

    \multirow{7}{4em}{\centering MIMIC-III}
& DLG &13.4$_{\pm1.9}$ &9.4$_{\pm1.6}$ &27.6$_{\pm1.8}$ &&8.8$_{\pm2.0}$ &3.4$_{\pm1.4}$ &6.9$_{\pm1.6}$ &&5.4$_{\pm2.1}$ &0.5$_{\pm0.6}$ &1.3$_{\pm1.2}$ &&4.3$_{\pm2.2}$ &0.4$_{\pm0.6}$ &1.4$_{\pm1.2}$ \\
& TAG &13.9$_{\pm1.8}$ &10.0$_{\pm1.6}$ &10.9$_{\pm1.6}$ &&9.7$_{\pm1.9}$ &5.3$_{\pm1.5}$ &7.7$_{\pm1.5}$ &&6.1$_{\pm2.0}$ &0.7$_{\pm0.7}$ &0.7$_{\pm1.3}$ &&5.0$_{\pm2.1}$ &0.4$_{\pm0.6}$ &1.6$_{\pm1.3}$ \\
& LAMP &30.8$_{\pm1.6}$ &10.0$_{\pm1.5}$ &10.8$_{\pm1.6}$ &&10.9$_{\pm1.7}$ &1.2$_{\pm0.8}$ &4.3$_{\pm1.4}$ &&7.1$_{\pm1.8}$ &1.4$_{\pm0.9}$ &2.0$_{\pm1.4}$ &&6.8$_{\pm1.9}$ &0.6$_{\pm0.7}$ &2.3$_{\pm1.4}$ \\
& APRIL &32.1$_{\pm1.6}$ &8.9$_{\pm1.4}$ &12.1$_{\pm1.6}$ &&14.3$_{\pm1.7}$ &2.1$_{\pm0.9}$ &4.7$_{\pm1.4}$ &&2.9$_{\pm1.8}$ &1.0$_{\pm0.8}$ &1.7$_{\pm1.3}$ &&2.0$_{\pm1.9}$ &0.2$_{\pm0.5}$ &1.3$_{\pm1.2}$ \\
& FILM &30.6$_{\pm1.7}$ &7.0$_{\pm1.4}$ &12.8$_{\pm1.6}$ &&9.9$_{\pm1.8}$ &1.4$_{\pm0.9}$ &3.7$_{\pm1.4}$ &&3.0$_{\pm1.9}$ &1.1$_{\pm0.8}$ &1.8$_{\pm1.3}$ &&2.9$_{\pm2.0}$ &0.6$_{\pm0.7}$ &1.4$_{\pm1.3}$ \\
& BGP &42.6$_{\pm1.5}$ &22.9$_{\pm1.6}$ &27.4$_{\pm1.6}$ &&36.1$_{\pm1.6}$ &19.6$_{\pm1.7}$ &23.4$_{\pm1.6}$ &&14.7$_{\pm1.8}$ &7.4$_{\pm1.2}$ &8.8$_{\pm1.4}$ &&14.0$_{\pm1.9}$ &6.9$_{\pm1.2}$ &7.8$_{\pm1.4}$ \\
& DAGER &85.5$_{\pm0.9}$ &42.2$_{\pm1.1}$ &53.9$_{\pm1.2}$ &&80.5$_{\pm1.1}$ &42.8$_{\pm1.3}$ &50.6$_{\pm1.3}$ &&41.9$_{\pm1.5}$ &31.3$_{\pm1.6}$ &22.7$_{\pm1.5}$ &&22.4$_{\pm1.8}$ &0.0$_{\pm0.0}$ &0.0$_{\pm0.0}$ \\
& \textbf{\methodName}
&\textbf{86.6$_{\pm0.9}$} &\textbf{56.7$_{\pm1.1}$} &\textbf{72.9$_{\pm1.1}$} &&\textbf{81.8$_{\pm1.0}$} &\textbf{54.6$_{\pm1.2}$} &\textbf{62.2$_{\pm1.2}$} &&\textbf{66.2$_{\pm1.2}$} &\textbf{52.6$_{\pm1.3}$} &\textbf{58.6$_{\pm1.3}$} &&\textbf{50.9$_{\pm1.4}$} &\textbf{30.6$_{\pm1.5}$} &\textbf{50.7$_{\pm1.4}$} \\
\hline
\end{tabular}
}
\vspace{-2mm}
\label{tab: baselines}
\end{table*}

\subsection{Evaluation Metrics}
Following previous works~\cite{dimitrov2022lamp, dengtaggradient}, we evaluate our attack performance using the ROUGE metric suite~\cite{lin2004rouge}. Specifically, we report the F-scores for ROUGE-1, ROUGE-2, and ROUGE-L. ROUGE-1/2 measures the overlap of unigrams (individual words) and bigrams (pairs of consecutive words) between the generated sequence and the reference sequence. This metric provides a straightforward assessment of word-level retrieval accuracy, capturing contextual information to indicate how well the reconstructed sequence captures the original words. ROUGE-L assesses the Longest Common Subsequence (LCS) between the generated and reference sequences. It considers the order of tokens and measures the proportion of the longest continuous matching subsequence in relation to the entire reference sequence, emphasizing the preservation of sequence structure. In addition to ROUGE, we report entity-level F1, computed using a standard Named Entity Recognition (NER) model~\cite{wang2019cross}. This metric quantifies the recovery of privacy-sensitive entities, such as person names and clinical concepts, which may not be fully captured by surface-level lexical overlap alone.

\subsection{PEFT Methods}
In our evaluation, we use SLoRA~\cite{babakniya2023slora} and FedAdapter~\cite{cai2022fedadapter}, both of which are adaptations of LoRA and Adapter techniques. SLoRA integrates standard FL training with matrix decomposition to achieve a favorable initialization, addressing the slower convergence rates typically seen with standard LoRA in FL. 
FedAdapter progressively upgrades the adapter configuration throughout a training session to quickly learn shallow knowledge and incrementally learn deep knowledge by incorporating deeper and larger adapters.

%Following previous work, we assume the lengths of sequences are known for both baselines and our attacks, as an adversary can run the attack for all possible lengths [Balunovic et al., 2022].
\begin{table*}[t]
\centering
\caption{Comparison of sequence reconstruction from gradients between~\methodName and DAGER on larger batch sizes with different datasets and various model architectures. R-1, R-2, and R-L denote the ROUGE-1, ROUGE-2, ROUGE-L scores respectively.}
 \resizebox{\textwidth}{!}{
\begin{tabular}{cccccccccccccccccc}
\hline
     \multirow{2}{4em}{\textbf{Model}}&\multirow{2}{4em}{\textbf{Dataset}}&\multirow{2}{4em}{\textbf{Method}} &\multicolumn{3}{c}{$b$ = 16} & &\multicolumn{3}{c}{$b$ = 32} &&\multicolumn{3}{c}{$b$ = 64} &&\multicolumn{3}{c}{$b$ = 128}\\\cline{4-6} \cline{8-10} \cline{12-14} \cline{16-18}
     
     && &R-1 &R-2 &R-L &&R-1 &R-2 &R-L &&R-1 &R-2 &R-L &&R-1 &R-2 &R-L\\
     \hline
     \multirow{6}{4em}{\centering GPT-2}&\multirow{2}{4em}{\centering CoLA} &DAGER  &\textbf{100} &\textbf{100} &\textbf{63.7} &&\textbf{91.2} &\textbf{87.6} &\textbf{56.1} &&\textbf{86.3} &\textbf{71.6} &\textbf{43.2} &&24.2 &11.3 &10.5\\
     
     && {\methodName}  &{93.2} & {76.6} & {63.7} && {63.6} & {42.1} & {29.5} && {47.7} & {42.8} & {6.9} && \textbf{33.3} & \textbf{28.5} & \textbf{15.9}\\\cline{2-18}

     &\multirow{2}{4em}{\centering Rotten Tomatoes} &DAGER  &\textbf{92.2} &{65.1} &{44.1} &&\textbf{75.3} &{34.1} &{32.2} &&0.0 &0.0 &0.0 &&0.0 &0.0 &0.0\\
     
     && {\methodName}  &87.0 & \textbf{73.4} & \textbf{50.6} && {65.8} & \textbf{61.3} & \textbf{28.5} && \textbf{36.5} & \textbf{29.1} & \textbf{19.4} && \textbf{27.8} & \textbf{11.9} & \textbf{3.2}\\\cline{2-18}

     &\multirow{2}{4em}{\centering MIMIC-III} &DAGER  &{71.2} &{34.1} &{22.2} &&{41.2} &{10.1} &{5.2} &&1.9 &0.0 &0.0 &&0.0 &0.0 &0.0\\
     
     && {\methodName}  &\textbf{77.0} & \textbf{70.4} & \textbf{56.2} && \textbf{53.6} & \textbf{39.4} & \textbf{34.7} && \textbf{18.4} & \textbf{16.0} & \textbf{11.0} && \textbf{20.2} & \textbf{7.6} & \textbf{3.1}\\
     \hline

     \multirow{6}{4em}{\centering BERT$_{Base}$}&\multirow{2}{4em}{\centering CoLA} &DAGER  &{40.9} &{22.3} &{21.6} &&{15.3} &{4.2} &{3.8} &&2.1 &0.0 &0.0 &&0.0 &0.0 &0.0\\
     
     && {\methodName}  &\textbf{66.6} &\textbf{54.7} &\textbf{45.5} &&\textbf{45.4} &\textbf{30.1} &\textbf{21.1} &&\textbf{34.1} &\textbf{30.6} &\textbf{4.9} &&\textbf{23.8} &\textbf{23.2} &\textbf{4.2}\\\cline{2-18}

     &\multirow{2}{4em}{\centering Rotten Tomatoes} &DAGER  &{9.7} &{1.3} &{1.2} &&{5.9} &{0.0} &{0.0} &&4.2 &0.0 &0.0 &&0.0 &0.0 &0.0\\
     
     && {\methodName}  &\textbf{54.4} &\textbf{45.9} &\textbf{31.6} &&\textbf{41.1} &\textbf{38.3} &\textbf{17.8} &&\textbf{22.8} &\textbf{18.2} &\textbf{12.1} &&\textbf{18.5} &\textbf{7.9} &\textbf{2.1}\\\cline{2-18}

     &\multirow{2}{4em}{\centering MIMIC-III} &DAGER  &{5.1} &{1.7} &{1.0} &&{2.7} &{0.0} &{0.0} &&2.7 &0.0 &0.0 &&0.0 &0.0 &0.0\\
     
     && {\methodName}  &\textbf{42.8} &\textbf{39.1} &\textbf{31.2} &&\textbf{29.8} &\textbf{21.9} &\textbf{19.3} &&\textbf{10.2} &\textbf{8.9} &\textbf{6.1} &&\textbf{11.2} &\textbf{4.2} &\textbf{1.7}\\
     \hline

     \multirow{6}{4em}{\centering T5$_{base}$}&\multirow{2}{4em}{\centering CoLA} &DAGER  &{45.0} & {24.5} & {23.8} && {16.8} & {4.6} & {4.2} && {2.3} & {0.0} & {0.0} && {0.0} & {0.0} & {0.0}\\
     
     && {\methodName}  &\textbf{59.9} & \textbf{49.2} & \textbf{41.0} && \textbf{40.9} & \textbf{27.1} & \textbf{19.0} && \textbf{30.7} & \textbf{21.5} & \textbf{4.4} && \textbf{21.4} & \textbf{11.9} & \textbf{3.8}\\\cline{2-18}

     &\multirow{2}{4em}{\centering Rotten Tomatoes} &DAGER  &{10.7} & {1.4} & {1.3} && {6.5} & {0.0} & {0.0} && {4.6} & {0.0} & {0.0} && {0.0} & {0.0} & {0.0}\\
     
     && {\methodName}  &\textbf{39.0} & \textbf{21.3} & \textbf{18.4} && \textbf{37.0} & \textbf{24.5} & \textbf{16.0} && \textbf{20.5} & \textbf{16.4} & \textbf{5.9} && \textbf{16.7} & \textbf{7.1} & \textbf{1.9}\\\cline{2-18}

     &\multirow{2}{4em}{\centering MIMIC-III} &DAGER  &{5.6} & {1.9} & {1.1} && {3.0} & {0.0} & {0.0} && {0.0} & {0.0} & {0.0} && {0.0} & {0.0} & {0.0}\\
     
     && {\methodName}  &\textbf{21.5} & \textbf{15.2} & \textbf{13.1} && \textbf{20.8} & \textbf{9.7} & \textbf{7.4} && \textbf{9.2} & \textbf{1.7} & {0.0} && \textbf{8.2} & {0.0} & {0.0}\\
     \hline

\end{tabular}
}
\vspace{-2mm}
\label{tab: batch}
\end{table*}

\begin{table}[t]
\centering
\caption{Comparison of sequence reconstruction from gradients between \methodName and other baseline algorithms under two different PEFT methods with $b$ = 4 on CoLA dataset.}
\resizebox{\columnwidth}{!}{
\begin{tabular}{ccccccccccccccccc}
\hline
     \multirow{2}{4em}{\textbf{Model}}&\multirow{2}{4em}{\textbf{Method}} &\multicolumn{2}{c}{\textbf{w/o} \textbf{PEFT}} & &\multicolumn{2}{c}{\textbf{SLoRA}} & &\multicolumn{2}{c}{\textbf{FedAdapter}}\\\cline{3-4} \cline{6-7} \cline{9-10}
     
      & &R-1 &R-2 &&R-1 &R-2 &&R-1 &R-2 \\
     \hline
     \multirow{7}{4em}{BERT$_{Base}$} &DLG &35.3 &1.4 &&11.2 &1.2 &&9.8 &0.0\\
     &TAG    &35.3 &1.6  &&12.5 &1.3 &&11.3 &1.7\\
     & LAMP  &40.4 &6.4  &&15.1 &7.2 &&8.8 &2.6 \\
     &APRIL  &41.2 &7.8  &&17.3 &8.1 &&17.8 &5.7\\
     &FILM   &43.9 &11.4 &&21.3 &9.8 &&12.9 &3.6\\
     & DAGER &90.3 &85.6 &&61.2 &24.2 &&39.1 &15.8\\
     & \textbf{\methodName}  &\textbf{91.6} &\textbf{90.1} &&\textbf{72.5} &\textbf{42.1} &&\textbf{49.6} &\textbf{31.2}\\ 
    \hline
    
    \multirow{7}{4em}{GPT-2} & \text{DLG} & 21.1 & 1.3 & & 10.1 & 1.1 & & 8.8 & 1.4 \\
    & \text{TAG}   & 31.8 & 1.4  & & 11.3 & 1.2  & & 10.2 & 1.5 \\
    & \text{LAMP}  & 36.4 & 5.7  & & 13.6 & 6.5  & & 7.9 & 2.3 \\
    & \text{APRIL} & 37.1 & 6.9  & & 15.6 & 7.3  & & 16.0 & 5.1 \\
    & \text{FILM}  & 39.5 & 10.3 & & 19.2 & 8.8  & & 11.6 & 3.2 \\
    & \text{DAGER} & \textbf{100}  & \textbf{100}  & & 58.2 & 34.2 & & 45.3 & 31.8\\
    & \textbf{\methodName}  &{94.3} &{91.2} &&\textbf{61.9} &\textbf{40.3} &&\textbf{45.7} &\textbf{33.3}\\
     \hline

    \multirow{7}{4em}{\centering Llama-2 (7B)} &\text{DLG} & 14.8 & 0.9 & & 7.1 & 0.8 & & 3.2 & 0.0 \\
    & \text{TAG} & 22.3 & 1.0 & & 8.0 & 0.8 & & 7.2 & 1.0 \\
    & \text{LAMP} & 25.8 & 4.0 & & 9.7 & 4.6 & & 5.7 & 1.6 \\
    & \text{APRIL} & 31.4 & 5.8 & & 13.2 & 6.1 & & 13.5 & 4.3 \\
    & \text{FILM} & 33.5 & 8.7 & & 16.2 & 7.4 & & 9.8 & 2.7 \\
    & \text{DAGER} & \textbf{100}  & \textbf{100}  & & 48.7 & 31.2 & & 38.1 & 24.3\\
    & \textbf{\methodName}  & {96.6} & {93.1} & & \textbf{55.7} & \textbf{36.3} & & \textbf{41.1} & \textbf{30.0} \\

     \hline
\end{tabular}
}

\vspace{-2mm}
\label{tab: peft_results}
\end{table}

\subsection{Comparison against Baselines}
We first evaluated our method and several state-of-the-art algorithms across batch sizes ranging from 1 to 8 using BERT$_{Base}$ on three datasets. Among the LAMP variants, we selected $_{L_1 + L_2}$, as it consistently outperformed LAMP$_{cos}$.
The results, summarized in Table~\ref{tab: baselines}, show that our approach consistently exceeds baseline performance across different datasets and maintains its effectiveness even as the batch size grows. Notably, DAGER outperforms \methodName on CoLA and Rotten Tomatoes for batch sizes of 1 and 2, but its performance deteriorates at larger batch sizes due to the exploding search space and the difficulty of reconstructing duplicate tokens. Furthermore, the baseline methods exhibit a marked decline in performance when applied to next-word prediction tasks, which involve larger context lengths. Overall, analyzing the impact of batch size variations, we observe that attacks become increasingly challenging as the batch size grows. Nevertheless, \methodName outperforms all baselines in both tasks, achieving a performance improvement of 55\% to 160\% for batch size 8 in classification tasks, and an impressive 250\% increase in next-word prediction tasks. These results highlight the robustness and effectiveness of our method, particularly in handling complex scenarios where other methods falter.

\subsection{Performance under Larger Batch Sizes}
While most prior attacks degrade significantly at batch sizes as small as 8, we next assess \methodName’s performance alongside DAGER across three model architectures under larger batch sizes. Table~\ref{tab: batch} compares GPT-2 (decoder-based), BERT$_{base}$ (encoder-based), and T5$_{base}$ (encoder-decoder) on three different datasets, revealing that \methodName outperforms DAGER at larger batch sizes. As batch size grows, DAGER’s performance declines due to the increased number of tokens as well as the number of duplicate tokens, which confuses its search heuristics. By contrast, \methodName’s gradient inversion mechanism helps resolve this ambiguity. DAGER also struggles with encoder-based and encoder-decoder models due to their larger search space. Furthermore, both methods see reduced performance on the MIMIC-III dataset, regardless of model architecture, because of its longer context length. Even at batch sizes as large as $128$, \methodName can successfully recover tokens from gradients, underscoring its robustness and superior performance.

\begin{table}[t]
\centering
\caption{Text reconstruction comparison with $b$ = 4. Yellow marks tokens recovered correctly, green shows correct n-grams.}
\resizebox{\linewidth}{!}{
\begin{tabular}{cccc}
\hline
\parbox[c]{1.0cm}{\centering \textbf{Dataset}} &
\parbox[c]{1.7cm}{\centering \textbf{Method}} &
\parbox[c]{6.3cm}{\centering \textbf{Sequence}} &
\parbox[c]{1.6cm}{\centering \textbf{Entity-F1}($\uparrow$)} \tabularnewline
\hline

\multirow{18}{*}{\parbox[c]{1.0cm}{\centering Rotten\\Tomatoes}} &
\multirow{1}{*}{\parbox[c]{1.7cm}{\centering Reference}} &
\parbox[c]{6.3cm}{\rule{0pt}{5pt}some actors have so much charisma that you'd be happy to listen to them reading the phone book. hugh grant and sandra bullock are two such likable actors} &
\multirow{1}{*}{\parbox[c]{1.6cm}{\centering -}} \\
\cline{2-4}

& \multirow{1}{*}{\parbox[c]{1.7cm}{\centering TAG}} &
\parbox[c]{6.3cm}{\rule{0pt}{5pt}\hlone{like happy so listen the} the \hlone{are so listen.. sandra grant actors} like so\hlone{ism you} you} &
\multirow{1}{*}{\parbox[c]{1.6cm}{\centering 0.20}} \\
\cline{2-4}

& \multirow{1}{*}{\parbox[c]{1.7cm}{\centering LAMP}} &
\parbox[c]{6.3cm}{\rule{0pt}{5pt}\hlone{happy char sandra have such bull} the \hltwo{to listen} \hlone{char} sandra is \hlone{hu happy them be phone the grant} grant} &
\multirow{1}{*}{\parbox[c]{1.6cm}{\centering 0.27}} \\
\cline{2-4}

& \multirow{1}{*}{\parbox[c]{1.7cm}{\centering FILM}} &
\parbox[c]{6.3cm}{\rule{0pt}{5pt}\hltwo{grant and sandra} \hlone{phone to be} \hltwo{happy to} \hlone{char} happy phone \hlone{actors} \hltwo{actors have so much}} &
\multirow{1}{*}{\parbox[c]{1.6cm}{\centering 0.31}} \\
\cline{2-4}

& \multirow{1}{*}{\parbox[c]{1.7cm}{\centering BGP}} &
\parbox[c]{6.3cm}{\rule{0pt}{5pt}\hlone{be. the actors phone and} the \hlone{actors} phone be\hlone{. them so some d.} \hltwo{to} \hlone{have such} \hltwo{much charism.} \hlone{listen}. \hltwo{sandra bullock. }} &
\multirow{1}{*}{\parbox[c]{1.6cm}{\centering 0.32}} \\
\cline{2-4}

& \multirow{1}{*}{\parbox[c]{1.7cm}{\centering DAGER}} &
\parbox[c]{6.3cm}{\rule{0pt}{5pt}\hltwo{some actors have so} \hlone{charisma much} \hltwo{that you} \hlone{be'd happy them listen} \hltwo{the phone book} \hlone{grant hugh} \hltwo{and sandra bull are such} \hlone{actors likable}} &
\multirow{1}{*}{\parbox[c]{1.6cm}{\centering 0.55}} \\
\cline{2-4}

& \multirow{1}{*}{\parbox[c]{1.8cm}{\centering \textbf{\methodName}}} &
\parbox[c]{6.3cm}{\rule{0pt}{5pt}\hltwo{some actors have so much charisma that you’d be happy to listen} \hltwo{them read} \hlone{book and phone}. \hlone{hugh have} \hlone{grant}ing \hltwo{Sandra bullock} \hlone{that} \hltwo{are such} so \hltwo{likable}} &
\multirow{1}{*}{\parbox[c]{1.6cm}{\centering 0.71}} \\
\hline

\multirow{14}{*}{\parbox[c]{1.0cm}{\centering CoLA}} &
\multirow{1}{*}{\parbox[c]{1.7cm}{\centering Reference}} &
\parbox[c]{6.3cm}{\rule{0pt}{5pt}The more pictures of himself that John buys the more arrogant he becomes} &
\multirow{1}{*}{\parbox[c]{1.6cm}{\centering -}} \\
\cline{2-4}

& \multirow{1}{*}{\parbox[c]{1.7cm}{\centering TAG}} &
\parbox[c]{6.3cm}{\rule{0pt}{5pt}\hlone{buys him more the of} \hltwo{more pictures} \hltwo{be come} \hlone{arrogant John he is}} &
\multirow{1}{*}{\parbox[c]{1.6cm}{\centering 0.28}} \\
\cline{2-4}

& \multirow{1}{*}{\parbox[c]{1.7cm}{\centering LAMP}} &
\parbox[c]{6.3cm}{\rule{0pt}{5pt}\hlone{arrogant} \hltwo{the more pictures of} \hlone{more} \hltwo{John buys} \hlone{that the} \hltwo{he becomes} \hlone{himself} more} &
\multirow{1}{*}{\parbox[c]{1.6cm}{\centering 0.41}} \\
\cline{2-4}

& \multirow{1}{*}{\parbox[c]{1.7cm}{\centering FILM}} &
\parbox[c]{6.3cm}{\rule{0pt}{5pt}\hltwo{The more} \hlone{more John} \hltwo{pictures of him}, \hlone{the ant} \hltwo{more he becomes} \hlone{self arrog}} &
\multirow{1}{*}{\parbox[c]{1.6cm}{\centering 0.49}} \\
\cline{2-4}

& \multirow{1}{*}{\parbox[c]{1.7cm}{\centering BGP}} &
\parbox[c]{6.3cm}{\rule{0pt}{5pt}\hltwo{The more} \hlone{he buys}, \hltwo{the more ar} \hlone{John becomes ant} \hltwo{of himself}} &
\multirow{1}{*}{\parbox[c]{1.6cm}{\centering 0.49}} \\
\cline{2-4}

& \multirow{1}{*}{\parbox[c]{1.7cm}{\centering DAGER}} &
\parbox[c]{6.3cm}{\rule{0pt}{5pt}\hltwo{The more pictures of him} \hlone{ John self buy} he \hltwo{more arrogant he becomes}} &
\multirow{1}{*}{\parbox[c]{1.6cm}{\centering 0.71}} \\
\cline{2-4}

& \multirow{1}{*}{\parbox[c]{1.8cm}{\centering \textbf{\methodName}}} &
\parbox[c]{6.3cm}{\rule{0pt}{5pt}\hltwo{The more pictures of himself that John buys the more arrogant he becomes}} &
\multirow{1}{*}{\parbox[c]{1.6cm}{\centering 0.92}} \\
\hline

\multirow{16}{*}{\parbox[c]{1.0cm}{\centering MIMIC\\-III}} &
\multirow{1}{*}{\parbox[c]{1.7cm}{\centering Reference}} &
\parbox[c]{6.3cm}{\rule{0pt}{5pt}neonatology attending triage note baby olivero is a term male infant admitted to the nicu for sepsis evaluation} &
\multirow{1}{*}{\parbox[c]{1.6cm}{\centering -}} \\
\cline{2-4}

& \multirow{1}{*}{\parbox[c]{1.7cm}{\centering TAG}} &
\parbox[c]{6.3cm}{\rule{0pt}{5pt}\hlone{attending note} \hltwo{baby oliver} \hltwo{is a term male infant} \hlone{for} \hltwo{admitted to the} \hlone{evaluation}} &
\multirow{1}{*}{\parbox[c]{1.6cm}{\centering 0.26}} \\
\cline{2-4}

& \multirow{1}{*}{\parbox[c]{1.7cm}{\centering LAMP}} &
\parbox[c]{6.3cm}{\rule{0pt}{5pt}\hlone{the nicu age attending note oliver} is term \hltwo{is a term infant male admitted} male \hlone{for evaluation tri}} &
\multirow{1}{*}{\parbox[c]{1.6cm}{\centering 0.31}} \\
\cline{2-4}

& \multirow{1}{*}{\parbox[c]{1.7cm}{\centering FILM}} &
\parbox[c]{6.3cm}{\rule{0pt}{5pt}\hlone{neon admitted oliver} is sister\hlone{term infant ing the evaluation note baby to male is}} &
\multirow{1}{*}{\parbox[c]{1.6cm}{\centering 0.25}} \\
\cline{2-4}

& \multirow{1}{*}{\parbox[c]{1.7cm}{\centering BGP}} &
\parbox[c]{6.3cm}{\rule{0pt}{5pt}\hlone{neon oliver baby a attending} \hltwo{to the} \hlone{is} \hltwo{triage note} \hlone{sis for} the is to to} &
\multirow{1}{*}{\parbox[c]{1.6cm}{\centering 0.29}} \\
\cline{2-4}

& \multirow{1}{*}{\parbox[c]{1.7cm}{\centering DAGER}} &
\parbox[c]{6.3cm}{\rule{0pt}{5pt}\hlone{neon} \hltwo{triage note} \hlone{logy attending} \hltwo{baby oliver} \hltwo{is a} \hltwo{male infant admitted to the} \hlone{sis nic evaluation}} &
\multirow{1}{*}{\parbox[c]{1.6cm}{\centering 0.51}} \\
\cline{2-4}

& \multirow{1}{*}{\parbox[c]{1.8cm}{\centering \textbf{\methodName}}} &
\parbox[c]{6.3cm}{\rule{0pt}{5pt}\hltwo{neonatology} \hltwo{attending triage note baby oliver} is \hltwo{is a term male infant admitted to the nicu for sep} and is \hlone{evaluation}} &
\multirow{1}{*}{\parbox[c]{1.6cm}{\centering 0.68}} \\
\hline
\end{tabular}
}
\vspace{-2mm}
\label{tab: Hybrid}
\end{table}
\subsection{Performance under PEFT Methods}
PEFT methods significantly reduce the number of training parameters, which in turn decreases the amount of information stored in the gradients. We evaluated \methodName against various baselines using two different types of PEFT methods. The batch size was kept at 4, as we observed that baseline methods struggle to recover any meaningful information from PEFT gradients when the batch size exceeds 4. For our evaluation, we utilized the CoLA dataset because, as shown in Table~\ref{tab: peft_results}, all baselines perform well on this dataset (results for other datasets are provided in Appendix~\ref{sec: peft}).
From Table~\ref{tab: peft_results}, it is evident that the performance of all baselines declines when PEFT methods are applied. This is because PEFT gradients primarily capture the most significant directions of change, often overlooking finer details that would be present in full parameter gradients. Despite this general performance drop, \methodName consistently outperforms all baselines by a significant margin. Notably, \methodName is able to reconstruct up to $28\%$ longer subsequences on the BERT$_{Base}$ model compared to the baselines. This improvement is $17\%$ for GPT-2 and $23\%$ for Llama-2, demonstrating the method's superiority across different models.

\subsection{Sample Reconstructions}
We present sample sequence reconstructions of \methodName against various baselines on different datasets using the BERT\textsubscript{Base} model, with a batch size of 4, in Table~\ref{tab: Hybrid}. In the table, correctly reconstructed n-grams are highlighted in green, while correct unigrams are marked in yellow. Our method consistently generates more coherent and contextually accurate reconstructions, qualitatively surpassing the baselines. Particularly noteworthy is the performance on the MIMIC-III dataset, where all baselines struggled significantly with numerous uncommon medical terms, which other methods failed to recover accurately. However, \methodName excels in extracting these complex tokens. Moreover, while the baselines often falter in reconstructing the correct sequence order, our method nearly achieves perfect reconstruction, underscoring its superior performance and robustness.

\subsection{Impact of Sequence Order Calibration}
To assess the impact of the Sequence Recovery phase on reconstructed sequences, we conducted an ablation study across various datasets and LLMs with a batch size of 4. The results, presented in Table~\ref{tab: sequence}, show a significant improvement in sequence alignment following sequence order calibration. Notably, the ROUGE-1 score remains unchanged because it captures only the presence of individual words, without considering their order. However, there is a marked enhancement in the ROUGE-2 and ROUGE-L scores, which account for bigram and sequence-level structure, respectively, demonstrating that sequence order calibration effectively improves the accuracy of reconstructed sequences, aligning them more closely with the original training data.

\begin{table}[t]
\centering
\caption{Comparison of the reconstructed sequence before and after Sequence Order Calibration with $b$ = 4.}

\resizebox{\columnwidth}{!}{
\begin{tabular}{c c c c c c c c c}
\hline
\multirow{2}{*}{\parbox[c]{1.5cm}{\centering \textbf{Dataset}}} &
\multirow{2}{*}{\parbox[c]{1.4cm}{\centering \textbf{Model}}} &
\multicolumn{3}{c}{\parbox[c]{2.8cm}{\centering \textbf{Before Sequence\\Calibration}}} &
&
\multicolumn{3}{c}{\parbox[c]{2.8cm}{\centering \textbf{After Sequence\\Calibration}}} \\
\cline{3-5} \cline{7-9}

& & \textbf{R-1} & \textbf{R-2} & \textbf{R-L} & & \textbf{R-1} & \textbf{R-2} & \textbf{R-L} \\
\hline

\multirow{3}{*}{\parbox[c]{1.5cm}{\centering Rotten\\Tomatoes}}
& BERT$_{Base}$ & 76.9 & 8.3  & 22.0 & & 76.9 & 63.2 & 59.1 \\
& GPT-2         & 91.2 & 17.9 & 11.5 & & 91.2 & 86.8 & 71.8 \\
& Llama-2 (7B)  & 93.2 & 26.8 & 19.0 & & 93.2 & 88.4 & 83.1 \\
\hline

\multirow{3}{*}{\parbox[c]{1.5cm}{\centering CoLA}}
& BERT$_{Base}$ & 91.6 & 13.4 & 12.2 & & 91.6 & 90.1 & 86.9 \\
& GPT-2         & 94.3 & 21.2 & 24.5 & & 94.3 & 91.2 & 90.2 \\
& Llama-2 (7B)  & 96.6 & 23.5 & 26.4 & & 96.6 & 93.1 & 92.3 \\
\hline

\multirow{3}{*}{\parbox[c]{1.5cm}{\centering MIMIC-\\III}}
& BERT$_{Base}$ & 66.9 & 14.5 & 9.1  & & 66.9 & 53.2 & 59.1 \\
& GPT-2         & 88.1 & 13.9 & 11.4 & & 88.1 & 81.8 & 68.7 \\
& Llama-2 (7B)  & 89.3 & 32.7 & 16.3 & & 89.3 & 85.1 & 70.2 \\
\hline
\end{tabular}
}
\label{tab: sequence}
\vspace{-2mm}
\end{table}

\subsection{Robustness against Privacy Defenses}
We further assess the effectiveness of our attack against privacy defense mechanisms that incorporate Gaussian noise and/or gradient clipping, which are commonly employed in Differentially Private Stochastic Gradient Descent (DP-SGD)~\cite{geyer2017differentially}.
As is typical in privacy-preserving techniques, there is a trade-off: increasing noise enhances privacy but at the cost of reduced model accuracy. To explore this balance, we evaluated the performance of a fine-tuned BERT$_{Base}$ model on the Rotten Tomatoes dataset, carefully constraining the noise level ($\sigma$) and gradient clipping boundary ($\epsilon$) to minimize impact on model accuracy. Particularly, we refrained from testing noise levels above $10^{-5}$ and $\epsilon = 0.3$ due to the substantial accuracy degradation observed at these thresholds.

The results of our experiments, summarized in Table~\ref{tab: DP}, confirm that all baseline attacks have reduced reconstruction performance under these defenses. Notably, gradient clipping adversely affects all attacks, including \methodName, with DP further exacerbating this impact. However, it is noteworthy that even with the presence of privacy defenses, a substantial portion of the text remains recoverable. Moreover, \methodName significantly outperforms all baselines in performance, illustrating the robustness of our proposed attack against privacy defenses.

\begin{table}[t]
\centering
\caption{Comparison of different baselines against privacy defenses, with BERT$_{Base}$ $(b=4)$ on the Rotten Tomatoes dataset.}
\resizebox{\columnwidth}{!}{
\begin{tabular}{cccccccccccc}
\hline
     \multirow{3}{4em}{{Method}} &\multicolumn{3}{c}{\centering Additive Noise} &&\multicolumn{3}{c}{\centering Gradient Clipping} &&\multicolumn{3}{c}{\centering DP}\\

     &\multicolumn{3}{c}{\centering$\sigma = 10^{-5}$} & &\multicolumn{3}{c}{\centering$\epsilon=0.3$} & & \multicolumn{3}{c}{\centering$\sigma=10^{-5}$, $\epsilon=0.3$}\\\cline{2-4}\cline{6-8}\cline{10-12}
     
      & R-1 & R-2 & R-L & & R-1 & R-2 &R-L & & R-1 & R-2 &R-L\\
     \hline
    DLG    &13.4 &2.6 &0.4  &&10.7 &0.0 &0.3  &&9.4  &0.0  &0.3  \\
    TAG    &22.3 &6.2 &0.7  &&17.8 &3.0 &0.6  &&15.6 &1.3 &0.5  \\
    LAMP   &19.7 &6.0 &1.8  &&15.8 &2.8 &1.4  &&13.8 &1.2 &1.3  \\
    APRIL  &22.6 &11.2 &1.9  &&18.1 &3.6 &1.5  &&15.8 &1.9 &1.3  \\
    FILM   &24.9 &8.9 &2.1  &&19.9 &5.1 &1.7  &&17.4 &3.2 &1.5  \\
    BGP    &23.9 &9.4 &2.8  &&19.1 &5.5 &2.2  &&16.7 &3.6 &2.0  \\
    DAGER  &70.9 &{64.8} &41.4  &&54.9 &34.7 &23.9  &&36.9 &7.1 &5.9  \\
    \textbf{\methodName} &\textbf{71.5} &\textbf{68.1} &\textbf{58.2} &&\textbf{62.9} &\textbf{51.2} &\textbf{46.5} &&\textbf{59.1} &\textbf{47.2} &\textbf{41.8}\\
     \hline
\end{tabular}
}
\label{tab: DP}
\vspace{-2mm}
\end{table}

\begin{table}[t]
\centering
\caption{Effect of different model sizes on the performance of \methodName. Experiment with different sizes of Llama-2 and 3.1 models trained on CoLA dataset.}
\vspace{-0mm}
\resizebox{\columnwidth}{!}{
\begin{tabular}{cccccccccccc}
\hline
     \multirow{2}{4em}{{Model}} &\multicolumn{2}{c}{\centering $b$ = 1} &&\multicolumn{2}{c}{\centering $b$ = 8} &&\multicolumn{2}{c}{\centering $b$ = 32} &&\multicolumn{2}{c}{\centering $b$ = 64}\\\cline{2-3} \cline{5-6}\cline{8-9}\cline{11-12}
     
      & R-1 & R-2 & & R-1 & R-2 & & R-1 & R-2 & & R-1 & R-2\\
     \hline
     Llama-2 (7B)    &99.3 &99.1   &&93.6 &89.2  &&65.6 &48.1 &&51.7 &45.8\\
     Llama-2 (13B)   &100.0 &100.0 &&95.9 &91.3  &&71.1 &64.7 &&55.2 &47.1\\
     Llama-3.1 (8B)  &98.5 &95.2   &&94.2 &90.1  &&66.7 &49.3 &&52.3 &46.7\\
     Llama-3.1 (70B) &100.0 &100.0 &&98.2 &97.3  &&87.2 &84.1 &&69.1 &63.2\\
     \hline
\end{tabular}
}
\label{tab: model_size}
\vspace{-2mm}
\end{table}

\subsection{Effect of Model Size}
Prior work \cite{dengtaggradient, dimitrov2022lamp} suggests that model size strongly influences the extent of client information leakage. To explore this, in Table~\ref{tab: model_size} we present results of \methodName on different Llama-2 and Llama-3.1 model sizes (Experiments with GPT-2 models are provided in Appendix~\ref{sec: model_Size}) using the CoLA dataset and batch sizes up to $64$. Our findings reveal minimal performance differences across model sizes, although larger models produce more substantial gradients, which appear to enhance the effectiveness of our embedding regularization. This leads to more tokens being accurately reconstructed from gradients despite the increased batch size. Also note that Llama-3.1 has a larger vocabulary of 128, 256 tokens, but this does not affect the performance of \methodName.

\subsection{Runtime Analysis}
The BERT$_{Base}$ model experiments were among the most computationally intensive, particularly on the MIMIC-III dataset with a batch size of $b = 4$. \methodName required approximately $39$ hours to complete, balancing between performance and efficiency. TAG was the most time-efficient gradient-matching baseline on BERT, completing in just $13$ hours, while LAMP took $35$ hours. APRIL, benefiting from its analytical design, remained the fastest overall with $23$ hours. Among the search-based methods, BGP and FILM were more demanding, taking $60$ and $45$ hours, respectively, whereas DAGER was the slowest at $74$ hours.

In contrast, the experiments with GPT-2 and Llama-2 models were relatively faster. \methodName completed the MIMIC-III runs in $9$ hours on GPT-2 and $14$ hours on Llama-2 (7B) for $b = 4$. When the batch size was increased to $b = 8$ on Llama-2 (7B), the runtime rose to $19$ hours. This reflects the added cost of the Sequence Order Calibration phase, which contributes to increased computation time at higher batch sizes. Despite this, \methodName maintains a favorable efficiency-performance trade-off, especially compared to search-based methods.
%-------------------------------------------------------------------------------

\section{Related Work}
Data reconstruction attacks in federated learning (FL) can be broadly categorized into \textit{Gradient-matching-based} and \textit{Gradient-analytic-based} methods. \textit{Gradient-matching-based} attacks rely on optimization techniques to iteratively match the shared gradients with reconstructed data, often using loss functions and priors to refine recovery. \textit{Gradient-analytic-based} methods, which also encompass search-based approaches, extract data directly from gradients by leveraging their inherent properties, such as linear relationships or rank deficiencies. Search-based techniques, such as beam search or token alignment, are often employed in this category to assemble data representations from gradient information. These two categories have been explored extensively in recent works:

%-------------------------------------------------------------------------------
\noindent\textbf{Gradient-Matching-based Attacks.}
The concept of data reconstruction attacks, introduced by~\cite{zhu2019deep}, showed that an adversary could exploit shared gradients to reconstruct a client’s private data samples. Subsequent studies~\cite{zhao2020idlg, zhu2020r} reinforced the notion that gradients can still leak information, challenging the fundamental privacy assumptions of FL. In the image domain, recent works~\cite{yin2021see, jeon2021gradient, li2022auditing} have achieved highly accurate data reconstruction from gradients, highlighting the significant privacy risks in FL.

In the text domain, gradient inversion faces additional challenges due to the discrete and sequential nature of text data. DLG~\cite{zhu2019deep} was the first method to reconstruct text from gradients, but its performance degraded with longer sequences and larger batch sizes. TAG~\cite{dengtaggradient} improved on DLG by adding an $L_1$ penalty to stabilize optimization. Similarly, LAMP~\cite{dimitrov2022lamp} introduced cosine similarity for reconstruction loss, added embedding regularization, and incorporated a language model prior to guide reconstruction towards more natural text. Despite these advancements, most gradient inversion attacks require small batch sizes and complete access to model parameters, limiting their practical applicability.

\noindent\textbf{Gradient-Analytic-based Attacks.}
Analytic-based attacks recover the data directly from the gradient itself without relying on optimization or gradient matching. For example, FILM~\cite{gupta2022recovering} recovers token representations by observing the gradients of embedding layers and then assembles them into sequences via beam search. However, a key limitation is its assumption that embedding layers are actively trained—an assumption that rarely holds in federated fine-tuning, where embedding layers are typically frozen. APRIL~\cite{lu2022april} takes a step further by recovering positional embeddings, thereby achieving exact gradient leakage in transformers; nonetheless, it relies on a batch size of one and full access to positional embedding gradients, making it less feasible for large-scale federated environments. BGP~\cite{li2023beyond} introduces a two-stage privacy attack by targeting the Pooler layer. In the first stage, BGP recovers intermediate feature directions to serve as supervisory signals; then, in the second stage, it combines these signals with gradient inversion and prior knowledge to recover training data. However, BGP depends on specific architectural features, limiting its applicability to models with Pooler layers, which degrades its performance with larger batch sizes and interdependent features. DAGER~\cite{petrov2024dager}, while not restricted to analyzing embedding gradients alone, exploits the rank-deficiency of gradients to reconstruct text tokens even under parameter-efficient fine-tuning (PEFT) conditions, thus circumventing some of the constraints of earlier embedding-based methods. That said, DAGER’s search-based methodology can incur substantial computational overhead on larger batch sizes, highlighting the persistent trade-offs between attack effectiveness, model complexity, and scalability.

\section{Conclusion}
%-------------------------------------------------------------------------------
In this paper, we introduced \methodName, a scalable approach for reconstructing private text data from gradients in LLMs, even under the constraints of various PEFT methods. By exploiting the linear dependency between embedding vectors and gradient columns, and incorporating a sequence order calibration method, \methodName accurately recovers tokens and preserves their correct sequence. Our experiments show that \methodName consistently outperforms prior attacks across diverse datasets and model architectures, proving robustness against challenges like larger batch sizes, longer sequences, and different PEFT methods. These findings expose critical privacy risks associated with gradients in federated LLMs, emphasizing the urgent need for stronger privacy-preserving measures to protect sensitive text data.

\section{Acknowledgement}
We would like to thank our anonymous reviewers and shepherd for their insightful feedback. This work is supported in part by NSF CNS-2114161, ECCS-2132106, CBET-2130643, and CNS-2403529.
\bibliographystyle{ieeetr} 
\bibliography{ref}

\appendix
\setcounter{theorem}{0}
\setcounter{lemma}{0}
\section{Appendix}
\subsection{Theorems and Proofs}\label{sec: theorem}
\begin{theorem}
As the attention layers of the transformer are linear layers (e.g., $\bm{Y} = \bm{ZW} + \bm{B}$, where $\bm{Y}$ represents the output, $\bm{Z}$ denotes the input, $\bm{W}$ is the layer's weight matrix, and $\bm{B}$ is the bias term), the gradients of the loss $\mathcal{L}$ w.r.t $\bm{W}$ for a linear layer can be expressed as:
\begin{equation}\label{eq: gradient}
    \frac{\partial \mathcal{L}}{\partial \bm{W}} = \bm{Z^T} \frac{\partial \mathcal{L}}{\partial \bm{Y}}.
\end{equation}
\end{theorem}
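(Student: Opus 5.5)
We prove the identity by computing one partial derivative at a time with the chain rule, and then reassembling the entries into matrix form. Fix the shapes: $\bm{Z} \in \mathbb{R}^{t \times d_{in}}$ is the stacked input, whose rows are the $t$ tokens of the batch after flattening the $b \times n$ axes. The weight matrix is $\bm{W} \in \mathbb{R}^{d_{in} \times d_{out}}$, and $\bm{Y} = \bm{Z}\bm{W} + \bm{B} \in \mathbb{R}^{t \times d_{out}}$, where $\bm{B}$ is the bias broadcast across rows and does not depend on $\bm{W}$. The loss $\mathcal{L}$ depends on $\bm{W}$ only through $\bm{Y}$, since every downstream computation (softmax attention, later layers, the loss head) receives $\bm{W}$ only via the outputs of this layer. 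We can therefore write $\mathcal{L} = \mathcal{L}(\bm{Y}(\bm{W}))$ and apply the multivariate chain rule entrywise.

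First, write the layer entrywise as $Y_{ij} = \sum_{k} Z_{ik} W_{kj} + B_{ij}$. Differentiating gives $\partial Y_{ij}/\partial W_{pq} = Z_{ip}\,\delta_{jq}$: the weight $W_{pq}$ only influences column $q$ of $\bm{Y}$, and its coefficient in row $i$ is $Z_{ip}$. The bias contributes nothing.

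Second, apply the chain rule:
\begin{equation*}
\frac{\partial \mathcal{L}}{\partial W_{pq}} = \sum_{i,j} \frac{\partial \mathcal{L}}{\partial Y_{ij}} \frac{\partial Y_{ij}}{\partial W_{pq}} = \sum_{i=1}^{t} Z_{ip} \frac{\partial \mathcal{L}}{\partial Y_{iq}} = \Bigl(\bm{Z}^T \frac{\partial \mathcal{L}}{\partial \bm{Y}}\Bigr)_{pq}.
\end{equation*}
Since this holds for every pair $(p,q)$, the matrices agree, which is the stated identity. As a sanity check, the shapes match: $\bm{Z}^T \in \mathbb{R}^{d_{in}\times t}$ times $\partial\mathcal{L}/\partial\bm{Y} \in \mathbb{R}^{t\times d_{out}}$ gives a $d_{in}\times d_{out}$ matrix, the same shape as $\bm{W}$.

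The only delicate point is the modelling assumption, not the calculation. The attention block as a whole is not linear, because of the softmax. What the theorem uses is that each projection $\bm{W}^Q, \bm{W}^K, \bm{W}^V$ (and likewise the output projection, and the LoRA or adapter matrices) is applied through an affine map of the form above. All nonlinear effects are absorbed into the upstream gradient $\partial\mathcal{L}/\partial\bm{Y}$. I would state this explicitly, so that "linear layer" is read as "the parameter enters only through an affine map $\bm{Z}\bm{W}+\bm{B}$." Two consequences follow directly from the entrywise derivation:
\begin{itemize}
\item For batched 3-D tensors $\bm{Z} \in \mathbb{R}^{b\times n\times d}$, the sum over $i$ simply runs over all $t = bn$ token positions, which justifies flattening.
\item If the same weight is reused in several places (for example, shared projections), the gradient is the sum of the per-use contributions, and the same formula holds after stacking the inputs.
\end{itemize}
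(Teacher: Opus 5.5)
Your proof is correct and takes the same route as the paper: compute the entrywise derivative $\partial Y_{ij}/\partial W_{pq} = Z_{ip}\delta_{jq}$, then apply the chain rule. Your explicit contraction over the token index $i$ is cleaner than the paper's step of writing $\partial\bm{Y}/\partial\bm{W} = \bm{Z}^T$ as a matrix and substituting it into the chain rule, since that step glosses over the fourth-order tensor structure and the order of the factors.
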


\begin{proof}
     To find the gradient of $\mathcal{L}$ with respect to the weight matrix $\bm{W}$, we utilize the chain rule from matrix calculus. Applying the chain rule yields:
\begin{equation}
\frac{\partial \mathcal{L}}{\partial \bm{W}} = \frac{\partial \mathcal{L}}{\partial \bm{Y}} \frac{\partial \bm{Y}}{\partial \bm{W}},
\end{equation}
where $\frac{\partial \mathcal{L}}{\partial \bm{Y}}$ denotes the gradient of the loss with respect to the output $\bm{Y}$. Given that $\bm{Y} = \bm{ZW} + \bm{B}$, and the bias term $\bm{B}$ is independent of $\bm{W}$, the partial derivative $\frac{\partial \bm{Y}}{\partial \bm{W}}$ is determined solely by the derivative of the term $\bm{ZW}$. 
By ignoring the bias term $\bm{B}$ for each element ${Y}_{ij}$ of $\bm{Y}$, we have:
\begin{equation}
{Y}_{ij} = \sum_{k=1}^{d} {Z}_{ik} {W}_{kj}.
\end{equation}

By taking the partial derivative of $Y_{ij}$ with respect to an element $W_{ab}$ of $\bm{W}$, we have:
\begin{equation}
\frac{\partial Y_{ij}}{\partial W_{ab}} = \frac{\partial}{\partial W_{ab}} \left(\sum_{k=1}^{d} Z_{ik} W_{kj}\right) = Z_{ia} \delta_{jb},
\end{equation}
where $\delta_{jb}$ is the Kronecker delta, which is 1 if $j = b$ and 0 otherwise. Therefore, the derivative $\frac{\partial \bm{Y}}{\partial \bm{W}}$ is a tensor, but it can be represented compactly in matrix form as:
\begin{equation}
   \frac{\partial \bm{Y}}{\partial \bm{W}} = \bm{Z}^T.
\end{equation}
This derivative tells us how each element of $\bm{W}$ affects the corresponding output $\bm{Y}$, and substituting the result into the chain rule, we can obtain:
\begin{equation}
\frac{\partial \mathcal{L}}{\partial \bm{W}} = \bm{Z}^T \frac{\partial \mathcal{L}}{\partial \bm{Y}} .
\end{equation}
\end{proof}

\begin{theorem}\label{theorem:2}
Under the assumption that 
%the total number of tokens in the batch is smaller than the embedding dimension ($t < d$) and 
$\frac{\partial \mathcal{L}}{\partial \bm{W}}$ is rank-deficient, %with rank at most $t$, 
$\bm{Z^T}$ is a linear combination of the columns of $\frac{\partial \mathcal{L}}{\partial \bm{W}}$.
\end{theorem}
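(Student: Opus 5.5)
The plan is to start from Theorem~\ref{theorem:1} and turn the claim into a statement about column spaces. Write $\bm{G} := \frac{\partial \mathcal{L}}{\partial \bm{W}} \in \mathbb{R}^{d \times h}$ and $\bm{D} := \frac{\partial \mathcal{L}}{\partial \bm{Y}} \in \mathbb{R}^{t \times h}$, so that $\bm{G} = \bm{Z}^T \bm{D}$ with $\bm{Z}^T \in \mathbb{R}^{d \times t}$. Read literally, the statement says each column of $\bm{Z}^T$ (each token embedding) lies in $\mathrm{col}(\bm{G})$. The factorization gives the reverse inclusion for free: every column of $\bm{G}$ is $\bm{Z}^T$ times a column of $\bm{D}$, so $\mathrm{col}(\bm{G}) \subseteq \mathrm{col}(\bm{Z}^T)$. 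The proof therefore consists of upgrading this inclusion to an equality, and then reading off the coefficients of the linear combination.

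For the upgrade I would add the condition that $\bm{D}$ has full row rank $t$, which requires $t \le h$. The rank-deficiency hypothesis enters here. Deficiency of $\bm{G}$ is the regime $t < \min(d,h)$, in which the bound $\mathrm{rank}(\bm{G}) \le \min(t,d,h)$ from Section~\ref{sec: rank} is attained by $t$. In that case $\bm{D}\bm{D}^T \in \mathbb{R}^{t\times t}$ is invertible, and I set $\bm{D}^{+} = \bm{D}^T(\bm{D}\bm{D}^T)^{-1}$. Right-multiplying the identity of Theorem~\ref{theorem:1} by $\bm{D}^{+}$ gives
\[
\bm{Z}^T = \bm{G}\,\bm{D}^{+}.
\]
This exhibits every embedding $\bm{z}_i$ explicitly as a linear combination of the columns of $\bm{G}$, with coefficients given by the $i$-th column of $\bm{D}^{+}$. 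As a consistency check, $\mathrm{rank}(\bm{G}) = \mathrm{rank}(\bm{Z}^T)$, so the two column spaces coincide. This equality is exactly what the projection loss $\mathcal{L}_{dis}$ uses, with $\bm{Q}$ an orthonormal basis of $\mathrm{col}(\bm{G})$.

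The main obstacle is justifying that $\mathrm{rank}(\bm{D}) = t$. Without it the statement is false: if some $\bm{D}$-row vanishes, for example a token with zero loss sensitivity, the corresponding embedding need not lie in $\mathrm{col}(\bm{G})$. I would state full row rank explicitly as the precise form of the assumption. I would then argue that it holds generically when $t \le h$: the rows of $\bm{D}$ are backpropagated signals depending nonlinearly and distinctly on each token's context, so linear dependence among them is a measure-zero event, the same genericity argument used for DAGER-style span checks. To cover the case where $\bm{D}$ is rank-deficient, I would give the weaker conclusion that the embeddings lie in $\mathrm{col}(\bm{G})$ up to components along $\ker(\bm{D}^T)$-weighted directions. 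The same argument applies verbatim to PEFT linear layers, with $h$ replaced by $r$.
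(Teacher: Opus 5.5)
Your argument is correct for the statement with your added hypothesis $\mathrm{rank}(\bm{D}) = t$, and it takes a different and sounder route than the paper's. The paper QR-factors $\frac{\partial \mathcal{L}}{\partial \bm{W}} = \bm{Q}\bm{R}$, writes $\bm{Q} = \bm{Z}^T\bm{R}^\dagger\frac{\partial \mathcal{L}}{\partial \bm{Y}}$, and concludes informally that $\bm{Z}^T$ lies in $\mathrm{col}(\bm{Q})$. You instead right-invert the other factor, $\bm{D}$, and obtain $\bm{Z}^T = \bm{G}\bm{D}^{+}$. This gives explicit coefficients and the equality $\mathrm{col}(\bm{G}) = \mathrm{col}(\bm{Z}^T)$, which is what $\mathcal{L}_{dis}$ relies on.

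The paper's route has the appeal of working directly with the $\bm{Q}$ used in the attack, but it cannot give the claimed direction. The factors in its formula are mis-ordered and do not even conform unless $t = h$. The legitimate manipulation is $\bm{Q}\bm{R}\bm{R}^\dagger = \bm{Z}^T\bm{D}\bm{R}^\dagger$, where $\bm{R}\bm{R}^\dagger$ is only a projector because $\bm{R}$ is rank-deficient. It yields only the free inclusion $\mathrm{col}(\bm{G}) \subseteq \mathrm{col}(\bm{Z}^T)$ that you noted.

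Because the paper places no condition on $\frac{\partial \mathcal{L}}{\partial \bm{Y}}$, no argument of that shape can close the gap. Your vanishing-row example (or simply $\bm{D} = 0$) shows the statement as written is false, so your full-row-rank assumption is a necessary repair, not a convenience.

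Two small corrections to your write-up:
\begin{itemize}
\item The rank-deficiency of $\bm{G}$ does no logical work in your proof. The identity $\bm{Z}^T = \bm{G}\bm{D}^{+}$ holds whenever $\mathrm{rank}(\bm{D}) = t$; the regime $t < d$ only makes the conclusion non-vacuous.
\item Your PEFT remark ``with $h$ replaced by $r$'' requires $t \le r$. That fails in the paper's own settings, e.g.\ $r = 32$ with $t = bn = 128$ for CoLA at $b = 4$, so there the span property does not follow from your argument.
\end{itemize}
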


\begin{proof}
     We perform QR decomposition of $\frac{\partial {\mathcal{L}}}{\partial \bm{W}}$:
\begin{equation}
    \frac{\partial \mathcal{L}}{\partial \bm{W}} = \bm{QR},
\end{equation}
where $\bm{Q}$ is an orthogonal matrix and $\bm{R}$ is an upper triangular matrix. Because $\frac{\partial \mathcal{L}}{\partial \bm{W}}$ is rank deficient, some of the diagonal elements of $\bm{R}$ will be zero.
Substituting the QR decomposition into Equation~\ref{eq: gradient}, we can obtain:
\begin{equation}
    \bm{QR} = \bm{Z}^T\frac{\partial \mathcal{L}}{\partial \bm{Y}}.
\end{equation}
To isolate $\bm{Q}$, we use the Moore-Penrose pseudoinverse \cite{moore1920reciprocal}, which provides a generalized inverse for matrices, particularly useful when dealing with rank-deficient matrices. Applying the pseudoinverse, we can get:
\begin{equation}
    \bm{Q} = \bm{Z}^T \bm{R}^\dagger\frac{\partial {\mathcal{L}}}{\partial \bm{Y}} ,
\end{equation}
where $\bm{R}^\dagger$ denotes the pseudoinverse of $\bm{R}$. Since $\bm{Q}$ is an orthogonal matrix, its columns form an orthonormal basis, and $\bm{Z}^T\bm{R}^\dagger$ implies that $\bm{Z}^T$ is being scaled and rotated by $\bm{R}^\dagger$. Since $\bm{Q}$ is orthogonal and $\frac{\partial {\mathcal{L}}}{\partial \bm{W}}$ is rank deficient, $\bm{R}$ has rows or columns that are zero, making $\bm{R}\dagger$ only act on the non-zero parts. Thus, $\bm{Z}^T$ must lie within the column space of $\bm{Q}$, which is also the column space of $\frac{\partial {\mathcal{L}}}{\partial \bm{W}}$.
\end{proof}

\begin{lemma} \label{lemma:}
Let $\{\mathbf{g}_p\}_{p=1}^n$ be a sequence of vectors in $\mathbb{R}^d$. Let the total sum be $S \;=\; \sum_{p=1}^n \mathbf{g}_p$. For any \( k < n \), define the partial sum $S_{\le k} \;=\; \sum_{p=1}^k \mathbf{g}_p$. Then,
\[
S_{\le k} \cdot S
\;=\;
\bigl\|S_{\le k}\bigr\|^2
\;+\;
S_{\le k} \,\cdot\, \sum_{p=k+1}^n \mathbf{g}_p.
\]
In particular, if the remaining vectors \( \{\mathbf{g}_p\}_{p=k+1}^n \) are not strongly negatively correlated with \( S_{\le k} \), then the dot product \( S_{\le k} \cdot S \) will remain large.
\end{lemma}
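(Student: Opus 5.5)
The identity is a direct consequence of bilinearity of the dot product, so the plan is to split the total sum at index $k$ and expand. First I will write
\[
S \;=\; \sum_{p=1}^{k}\mathbf{g}_p + \sum_{p=k+1}^{n}\mathbf{g}_p \;=\; S_{\le k} + R_{>k},
\qquad R_{>k} := \sum_{p=k+1}^{n}\mathbf{g}_p .
\]
This split is legitimate because $k<n$, so both index ranges are well defined; $R_{>k}$ is nonempty, though the argument does not need that. Then I will take the dot product of both sides with $S_{\le k}$. Distributivity of the inner product over vector addition gives
$S_{\le k}\cdot S = S_{\le k}\cdot S_{\le k} + S_{\le k}\cdot R_{>k}$.
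Finally I will use $S_{\le k}\cdot S_{\le k} = \|S_{\le k}\|^2$, the Euclidean norm in $\mathbb{R}^d$, which yields exactly the displayed equation.

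For the ``in particular'' clause, I will make the informal phrase \emph{not strongly negatively correlated} precise. The natural quantitative reading is an assumption of the form $S_{\le k}\cdot R_{>k} \ge -\alpha \|S_{\le k}\|^2$ for some $\alpha\in[0,1)$. Under that assumption the identity immediately gives
\[
S_{\le k}\cdot S \;\ge\; (1-\alpha)\,\|S_{\le k}\|^2 ,
\]
so the dot product remains of the order of $\|S_{\le k}\|^2$, and in particular it is positive whenever $S_{\le k}\neq 0$.

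I could also phrase this via Cauchy--Schwarz: if the cosine between $S_{\le k}$ and $R_{>k}$ is at least $-c$, then $S_{\le k}\cdot S \ge \|S_{\le k}\|\bigl(\|S_{\le k}\| - c\|R_{>k}\|\bigr)$. This is large whenever $c\|R_{>k}\| < \|S_{\le k}\|$.

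There is no real obstacle in the algebra. The only delicate point is that the qualitative conclusion is not a theorem until ``strongly negatively correlated'' and ``large'' are given a definite meaning. I would therefore state the lower bound above as the rigorous content of that sentence, and say explicitly that ``large'' is meant relative to $\|S_{\le k}\|^2$.
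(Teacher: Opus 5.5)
Your proof is correct and is essentially the paper's argument: both use bilinearity of the dot product with the index range split at $k$. Your version, which writes $S = S_{\le k} + R_{>k}$ before expanding, is cleaner than the paper's, which expands a double sum with the index $p$ reused in both sums and then appeals to unspecified ``rearranging''. Your quantitative reading of the ``in particular'' clause, $S_{\le k}\cdot S \ge (1-\alpha)\|S_{\le k}\|^2$ whenever $S_{\le k}\cdot R_{>k} \ge -\alpha\|S_{\le k}\|^2$, is a genuine addition, since the paper leaves that clause as an informal remark without proof.
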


\begin{proof}
    By definition,
\[
S_{\le k} \,\cdot\, S
\;=\;
\left(\sum_{p=1}^k \mathbf{g}_p\right)
\cdot
\left(\sum_{p=1}^n \mathbf{g}_p\right).
\]
Distribute the dot product:
\[
=\;
\sum_{p=1}^k \sum_{p=1}^n \mathbf{g}_p \cdot \mathbf{g}_p
\;=\;
\sum_{p=1}^k \mathbf{g}_p \cdot \mathbf{g}_p
\;+\;
\sum_{p=1}^k \sum_{p=k+1}^n \mathbf{g}_p \cdot \mathbf{g}_p.
\]
Notice that
$\sum_{p=1}^k \mathbf{g}_p \cdot \mathbf{g}_p = \|S_{\le k}\|^2$ minus the cross terms within $\mathbf{g}_1, \dots, \mathbf{g}_k$. However, rearranging properly and grouping vectors gives us exactly
\begin{equation}
\|S_{\le k}\|^2
\;+\;
S_{\le k} \,\cdot\, \sum_{p=k+1}^n \mathbf{g}_p.
\end{equation}
Hence the stated identity follows.
\end{proof}

\begin{theorem}\label{theorem}
Consider a full sequence $\textbf{x} = (x_1, x_2, \dots, x_n)$ and two partial sequences of length $k$: A correctly ordered partial sequence $x_{\le k}^{\text{(correct)}}, e.g., (x_1, x_2, \dots, x_k)$, and a misordered partial sequence $x_{\le k}^{\text{(wrong)}}$, e.g., $(x_1, x_3, \dots)$. Let 
\begin{equation}
    \nabla_\theta^{\text{full}}:==-\sum_{t=1}^n \nabla_\theta \log p_\theta\bigl(x_t \mid x_{<t}\bigr),
\end{equation}
\begin{equation}
    \nabla_\theta^{\text{correct}} := -\sum_{t=1}^k \nabla_\theta \log p_\theta\bigl(x_t \mid x_{<t}\bigr),
\end{equation}

\begin{equation}
\begin{split}
\nabla_\theta^{\text{wrong}}:= -\Bigl[\nabla_\theta \log p_\theta(x_1) + \nabla_\theta \log p_\theta(x_3 \mid x_1) + \dots \\
+ \nabla_\theta \log p_\theta(x_t \mid x_{t-1})\Bigr],
\end{split}
\end{equation}

Then,
\[ \nabla_\theta^{\text{correct}} \cdot \nabla_\theta^{\text{full}} \;>\; \nabla_\theta^{\text{wrong}} \cdot \nabla_\theta^{\text{correct}},
\]
indicating that the gradient for the correctly ordered partial sequence is more aligned with the gradient of the full sequence than the gradient for the misordered partial sequence.
\end{theorem}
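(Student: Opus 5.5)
The plan is to reduce the comparison to a single inner product against $\nabla_\theta^{\text{correct}}$, and then control that inner product using the identity of Lemma~\ref{lemma:}.

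Write $\mathbf{g}_t := -\nabla_\theta \log p_\theta(x_t \mid x_{<t})$ for the per-position gradients of the true sequence, so that $\nabla_\theta^{\text{correct}} = S_{\le k}$ and $\nabla_\theta^{\text{full}} = S$. Write $\mathbf{h}_j$ for the per-position terms of the misordered prefix, so that $\nabla_\theta^{\text{wrong}} = \sum_{j=1}^{k}\mathbf{h}_j$. The first misordered term coincides with a true one: $\mathbf{h}_1 = -\nabla_\theta\log p_\theta(x_1) = \mathbf{g}_1$. The inequality to prove is therefore equivalent to
\begin{equation*}
\nabla_\theta^{\text{correct}} \cdot \bigl(\nabla_\theta^{\text{full}} - \nabla_\theta^{\text{wrong}}\bigr) > 0,
\qquad
\nabla_\theta^{\text{full}} - \nabla_\theta^{\text{wrong}} = \sum_{t=2}^{n}\mathbf{g}_t - \sum_{j=2}^{k}\mathbf{h}_j .
\end{equation*}

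By Lemma~\ref{lemma:},
\begin{equation*}
S_{\le k}\cdot S = \|S_{\le k}\|^2 + S_{\le k}\cdot \sum_{p>k}\mathbf{g}_p .
\end{equation*}
The left-hand side of the theorem thus has a guaranteed nonnegative part $\|S_{\le k}\|^2$, plus a tail term that the lemma's ``in particular'' clause treats as not strongly negative. For the right-hand side I would split
\begin{equation*}
\nabla_\theta^{\text{wrong}}\cdot S_{\le k} = \mathbf{g}_1\cdot S_{\le k} + \sum_{j\ge 2}\mathbf{h}_j\cdot S_{\le k}.
\end{equation*}
The shared piece satisfies $\mathbf{g}_1 \cdot S_{\le k} \le \|\mathbf{g}_1\|\,\|S_{\le k}\|$. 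For the mismatched-context terms $\mathbf{h}_j$, $j\ge 2$, each is a conditional gradient evaluated on a prefix that never occurs in the true sequence. I would bound their total projection onto $S_{\le k}$ by Cauchy--Schwarz, giving $\bigl\|\sum_{j\ge2}\mathbf{h}_j\bigr\|\,\|S_{\le k}\|\cos\phi$, where $\phi$ is the angle between $\sum_{j\ge2}\mathbf{h}_j$ and $S_{\le k}$.

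Combining these, it suffices to show
\begin{equation*}
\|S_{\le k}\|^2 + S_{\le k}\cdot\sum_{p>k}\mathbf{g}_p > \mathbf{g}_1\cdot S_{\le k} + \sum_{j\ge 2}\mathbf{h}_j\cdot S_{\le k}.
\end{equation*}
Since $\|S_{\le k}\|^2 - \mathbf{g}_1\cdot S_{\le k} = S_{\le k}\cdot\sum_{t=2}^{k}\mathbf{g}_t$, this reduces to the statement that the true continuation terms $\mathbf{g}_2,\dots,\mathbf{g}_n$ project more strongly onto $S_{\le k}$ than the spurious terms $\mathbf{h}_2,\dots,\mathbf{h}_k$ do.

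The main obstacle is exactly this last reduction: nothing about generic vectors forces it, so the theorem cannot hold unconditionally. My plan is to state explicitly, as the hypotheses under which the argument goes through, the two structural assumptions the text preceding Lemma~\ref{lemma:} already invokes informally:
\begin{enumerate}
\item[(i)] \emph{Directional consistency of the true sequence.} The tail $\sum_{p>k}\mathbf{g}_p$ is not negatively correlated with $S_{\le k}$, and the terms within the correct prefix are pairwise nonnegatively correlated. This gives $S_{\le k}\cdot\sum_{t\ge2}\mathbf{g}_t \ge 0$, and indeed $\ge \|S_{\le k}\|^2 - \|\mathbf{g}_1\|\,\|S_{\le k}\|$.
\item[(ii)] \emph{Near-orthogonality of misordered-context gradients.} The gradients $\mathbf{h}_j$ satisfy $\sum_{j\ge2}\mathbf{h}_j\cdot S_{\le k} \le \epsilon\,\|S_{\le k}\|^2$ with $\epsilon<1$. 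In the linear-layer form of Theorem~\ref{theorem:1}, each $\mathbf{g}_t$ and $\mathbf{h}_j$ is an outer product $\bm{z}^T\partial\mathcal{L}/\partial\bm{Y}$. I would first try to justify (ii) by arguing that a wrong prefix changes the contextual hidden states $\bm{Z}$, so the rank-one factors of $\mathbf{h}_j$ lie largely outside the span used by $S_{\le k}$.
\end{enumerate}

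Under (i) and (ii), the difference is at least $(1-\epsilon)\|S_{\le k}\|^2 - \|\mathbf{g}_1\|\,\|S_{\le k}\| + (\text{nonnegative tail})$. It is therefore strictly positive once $\|S_{\le k}\|$ exceeds $\|\mathbf{g}_1\|/(1-\epsilon)$, which holds for $k\ge 2$ by (i). I expect that making (ii) rigorous beyond a modeling assumption will not be possible in general, so I would present it as the stated hypothesis of the result.
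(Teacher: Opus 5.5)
Your reduction is sound and follows the paper's own skeleton: the paper also uses the Lemma to split off $\|S_{\le k}\|^2$ plus a tail, assumes no negative correlation, and claims that the misplaced conditional term reduces alignment. The paper leaves all of this informal. In particular, the equivalence with $S_{\le k}\cdot\sum_{t=2}^{n}\mathbf{g}_t> S_{\le k}\cdot\sum_{j=2}^{k}\mathbf{h}_j$ is correct.

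The step that fails is the last one: ``strictly positive once $\|S_{\le k}\|>\|\mathbf{g}_1\|/(1-\epsilon)$, which holds for $k\ge 2$ by (i).'' Hypothesis (i) only constrains signs of inner products. It gives $\|S_{\le k}\|^2\ge\sum_{t\le k}\|\mathbf{g}_t\|^2$, hence $\|S_{\le k}\|\ge\|\mathbf{g}_1\|$, but nothing supplies the extra factor $1/(1-\epsilon)>1$. Concretely, fix any $\epsilon\in(0,1)$ and take:
$k=2$, $n=3$, $\mathbf{g}_1\neq 0$, $\mathbf{g}_2=\delta\mathbf{g}_1$ with $0<\delta<\epsilon/(1-\epsilon)$, $\mathbf{g}_3=0$, and $\mathbf{h}_2=\epsilon(1+\delta)\mathbf{g}_1$.
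Then (i) holds and (ii) holds with equality, yet
\[
\nabla_\theta^{\text{correct}}\cdot\nabla_\theta^{\text{full}}-\nabla_\theta^{\text{wrong}}\cdot\nabla_\theta^{\text{correct}}=(1+\delta)\bigl(\delta-\epsilon(1+\delta)\bigr)\|\mathbf{g}_1\|^2<0 .
\]
So (i) and (ii) as you state them do not imply the theorem.

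The source of the problem is how (ii) is normalized. It measures the spurious projection against $\|S_{\le k}\|^2$, which contains the shared contribution $\mathbf{g}_1\cdot S_{\le k}$; that term appears on both sides and cancels. What the spurious terms actually have to lose against is only $S_{\le k}\cdot\sum_{t=2}^{n}\mathbf{g}_t$, which can be much smaller than $\|S_{\le k}\|^2$ when $\mathbf{g}_1$ dominates. There are two fixes:
\begin{enumerate}
\item Add $(1-\epsilon)\|S_{\le k}\|>\|\mathbf{g}_1\|$ as a separate hypothesis.
\item More naturally, measure (ii) against the true terms being replaced: $\sum_{j\ge 2}\mathbf{h}_j\cdot S_{\le k}\le\epsilon\,S_{\le k}\cdot\sum_{t=2}^{k}\mathbf{g}_t$ with $\epsilon<1$. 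Under (i) the difference is then at least $(1-\epsilon)\,S_{\le k}\cdot\sum_{t=2}^{k}\mathbf{g}_t\ge(1-\epsilon)\sum_{t=2}^{k}\|\mathbf{g}_t\|^2$, which is strictly positive as soon as some $\mathbf{g}_t\neq 0$ with $2\le t\le k$.
\end{enumerate}

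Finally, note that the paper's proof actually ends with $\nabla_\theta^{\text{correct}}\cdot\nabla_\theta^{\text{full}}>\nabla_\theta^{\text{wrong}}\cdot\nabla_\theta^{\text{full}}$, not the literal right-hand side of the statement. That full-gradient version is the comparison the calibration step uses. For it, your reduction becomes $S\cdot\bigl(\sum_{t=2}^{k}\mathbf{g}_t-\sum_{j=2}^{k}\mathbf{h}_j\bigr)>0$. Your hypothesis on the $\mathbf{h}_j$ would then have to control their projection onto all of $S$, tail included, not just onto $S_{\le k}$.
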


\begin{proof}
By Decomposing $\nabla_\theta^{\text{full}}$ we observe that  
   \begin{equation}
       \nabla_\theta^{\text{full}}
   \;=\;
   \nabla_\theta^{\text{correct}} - \sum_{t=k+1}^n \nabla_\theta \log p_\theta(x_t \mid x_{<t}).
   \end{equation}
Hence $\nabla_\theta^{\text{correct}}$ is the partial sum of the first $k$ gradient terms in $\nabla_\theta^{\text{full}}$.

Let $g_t = - \nabla_\theta \log p_\theta(x_t \mid x_{<t})$, applying Lemma~\ref{lemma:}, we get
\begin{equation}
    \nabla_\theta^{\text{correct}}\cdot \nabla_\theta^{\text{full}} = \bigl\|\nabla_\theta^{\text{correct}}\bigr\|^2 + \nabla_\theta^{\text{correct}}\cdot \sum_{t=k+1}^n g_t,
\end{equation}
Provided the vectors ${g_t}$ are not negatively correlated, $\nabla_\theta^{\text{correct}}\cdot \nabla_\theta^{\text{full}}$ tends to be substantial. 

In case of $\nabla_\theta^{\text{wrong}}$ the second term is $\nabla_{\theta}\log p_{\theta}(x_3\mid x_1)$ disrupts the natural alignment with the subsequent gradients for preceding tokens. Hence,

\begin{equation}
    \nabla_\theta^{\text{wrong}}\cdot \nabla_\theta^{\text{full}} = g_1 + \Tilde{g}_2 + \sum_{t=3}^k g_t,
\end{equation}
where, $\Tilde{g}_2 = - \nabla_\theta \log p_\theta(x_3 \mid x_1)$ This $\Tilde{g}_2$ typically points in a direction that reduces alignment with the true partial sum of $\sum_{t=3}^k g_t$. Therefore, under these assumptions
\begin{equation}
    \nabla_\theta^{\text{correct}}\cdot \nabla_\theta^{\text{full}} > \nabla_\theta^{\text{wrong}}\cdot \nabla_\theta^{\text{full}}
\end{equation}
\end{proof} 

\vspace{-1mm}
\subsection{Baselines and Parameter Setting} \label{app: baseline}
We benchmark our attack against seven baselines: DLG~\cite{zhao2020idlg}, TAG~\cite{dengtaggradient}, LAMP~\cite{dimitrov2022lamp}, APRIL~\cite{lu2022april}, FILM~\cite{gupta2022recovering}, BGP~\cite{li2023beyond} and DAGER~\cite{petrov2024dager}.
Among these, DLG, TAG, LAMP, and FILM are SOTA optimization-based data reconstruction attacks. APRIL, although originally proposed for vision transformers, can be adapted for NLP tasks as an exact data reconstruction attack. BGP employs a hybrid approach by combining an analytics-based attack with an optimization-based attack, and DAGER is an analytics search-based method.

In our experiments, we configure DLG, TAG, and LAMP to run for $30$ iterations, each with $75$ continuous optimization steps. For LAMP, we add another $200$ discrete optimization steps to minimize the combined reconstruction loss and perplexity. DAGER uses a span check acceptance threshold of $10^{-5}$ in the first layer and $10^{-3}$ in the second layer, along with a rank truncation of $\Delta=20$. For FILM, we perform $90,000$ iterations starting from an initial learning rate of $10^{-5}$, which is then linearly decayed. When using BGP, we set the feature match loss margin scale to $0.1$ and rely on a grid search, following the strategy outlined in its original paper. Regarding our own method’s hyperparameters, we run the continuous optimization for 100 iterations with $\gamma = 0.4$, and for the PEFT scenario specifically, we set $\gamma = 0.35$ and $\eta = 0.2$. Unless otherwise specified, the context length is $32$ for the CoLA dataset, $128$ for Rotten Tomatoes, and $256$ for MIMIC-III. For SLoRA, we employ a rank of $32$, and for FedAdapter, we choose a bottleneck dimension of $32$.
\subsection{Additional Evaluation Details}
\textbf{Large Language Models.}
In this work, we rigorously explore the effectiveness of our attack across various model architectures. As a starting point, we use the GPT-2 base variant, which consists of a 12-layer transformer architecture, a 768-dimensional hidden state, 12 attention heads, and a feed-forward filter with a size of 3072. We specifically chose the GPT-2 base variant for its balance between model complexity and computational efficiency, making it a standard benchmark in many language modeling tasks.

To further validate the generality and effectiveness of our attack, we extend our analysis to state-of-the-art models, such as Llama 2 (7B), BERT${Base}$, and T5${Base}$. Llama 2 (7B), with its 7 billion parameters, represents a cutting-edge decoder model, while BERT${Base}$, a 12-layer transformer with a 768-dimensional hidden state, serves as a representative encoder-only model. For experiments involving the MIMIC-III dataset, we utilize the BioBERT~\cite{lee2020biobert} version of the BERT${Base}$ model. Additionally, we evaluate our method on T5$_{Base}$, a 12-layer encoder-decoder transformer with a 768-dimensional hidden state and a feed-forward dimension of 2048. T5 serves as a representative of encoder-decoder architectures, enabling us to assess our method's performance across the full spectrum of model types.

All models used in this work were sourced in their pre-trained form from HuggingFace~\cite{wolf2019huggingface}, ensuring consistency with standard practices in language modeling research.

\textbf{Hardware Details.}
We implemented~\methodName in PyTorch~\cite{paszke2019pytorch} and conducted all experiments using NVIDIA A100 Tensor Core GPUs. For tests on the Llama-2 (7B) and GPT-2 architectures, we utilized two NVIDIA A100 GPUs, which offer 40 GB of memory. All other experiments were executed on a single GPU. The required RAM varied widely across experiments, ranging from 16 GB to 250 GB, depending on the model, batch size, and sequence length used.

\begin{figure}[t]
    \centering
    \includegraphics[width=0.7\linewidth]{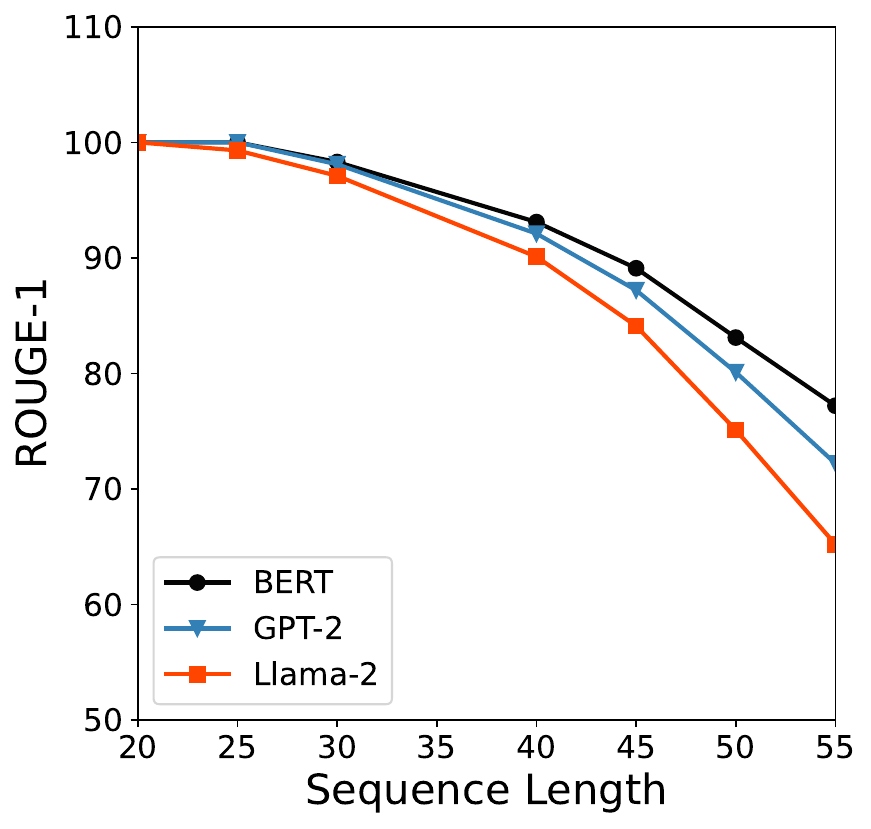}
    \caption{Impact of sequence length on the reconstruction efficiency.}
    \label{fig:sequence length}
\end{figure}

\subsection{Impact of Sequence Length}
As sequence length increases, the difficulty of recovering tokens from the gradient also grows. This challenge arises because longer sequences produce a more diffuse gradient signal, making it harder to isolate the contribution of individual tokens. In gradient inversion attacks, the information embedded in the gradient becomes less distinct as the number of tokens increases, leading to a degradation in token recovery performance. This phenomenon underscores the complexity of extracting specific token-level information when dealing with longer sequences, where the gradient's sensitivity to each token diminishes.

To evaluate how \methodName performs under these challenging conditions, we conducted experiments using the Rotten Tomatoes dataset \cite{socher2013recursive} with a batch size of $b=1$ and the BERT$_{Base}$ model under different sequence lengths. The results, as shown in Figure~\ref{fig:sequence length}, indicate that \methodName is highly effective in recovering tokens when the sequence length is up to $40$, with nearly all tokens being accurately recovered. Even as the sequence length increases, our attack continues to perform robustly, recovering more than $70\%$ of tokens for sequence lengths up to $55$.

\begin{table}[t]
\centering
\caption{Comparison of the reconstructed sequence using different loss functions with $b$ = 4.}
\resizebox{\columnwidth}{!}{
\begin{tabular}{c c c c c c c c}
\hline
\multirow{2}{*}{\parbox[c]{1.5cm}{\centering Dataset}} &
\multirow{2}{*}{\parbox[c]{1.4cm}{\centering Model}} &
\multicolumn{2}{c}{\parbox[c]{1.8cm}{\centering L2}} &
\multicolumn{2}{c}{\parbox[c]{1.8cm}{\centering L1 + L2}} &
\multicolumn{2}{c}{\parbox[c]{3.0cm}{\centering Weighted Layer-wise\\Cosine Similarity}} \\
\cline{3-4} \cline{5-6} \cline{7-8}

& & R-1 & R-2 & R-1 & R-2 & R-1 & R-2 \\
\hline

\multirow{3}{*}{\parbox[c]{1.5cm}{\centering Rotten\\Tomatoes}}
& BERT$_{Base}$ & 68.9 & 50.4 & 71.5 & 57.8 & \textbf{76.9} & \textbf{63.2} \\
& GPT-2         & 67.5 & 50.1 & 70.2 & 55.5 & \textbf{91.2} & \textbf{86.8} \\
& Llama-2 (7B)  & 63.8 & 45.7 & 66.7 & 52.1 & \textbf{93.2} & \textbf{88.4} \\
\hline

\multirow{3}{*}{\parbox[c]{1.5cm}{\centering CoLA}}
& BERT$_{Base}$ & 66.9 & 50.5 & 69.5 & 59.2 & \textbf{91.6} & \textbf{90.1} \\
& GPT-2         & 58.4 & 38.1 & 60.8 & 44.6 & \textbf{94.3} & \textbf{91.2} \\
& Llama-2 (7B)  & 51.8 & 35.1 & 54.3 & 39.5 & \textbf{96.6} & \textbf{93.1} \\
\hline

\multirow{3}{*}{\parbox[c]{1.5cm}{\centering MIMIC-\\III}}
& BERT$_{Base}$ & 53.5 & 42.6 & 58.2 & 46.3 & \textbf{66.9} & \textbf{53.2} \\
& GPT-2         & 51.8 & 40.6 & 56.3 & 44.2 & \textbf{88.1} & \textbf{81.8} \\
& Llama-2 (7B)  & 46.6 & 36.1 & 50.7 & 39.2 & \textbf{89.3} & \textbf{85.1} \\
\hline
\end{tabular}
}
\label{tab: distance_metric}
\end{table}

\subsection{Impact of Loss Function}
At the core, \methodName operates as a gradient matching method designed to closely replicate the shared gradients, which are crucial for effective reconstruction. The choice of distance metric plays a pivotal role in determining the quality of this reconstruction, as it directly impacts how well the method can align the gradients and, consequently, recover the original tokens.  

To thoroughly assess the impact of different distance metrics, we evaluated \methodName using three distinct metrics. The results, summarized in Table~\ref{tab: distance_metric}, reveal significant variations in performance depending on the metric used. Among the metrics tested, the Weighted Layer-wise Cosine Similarity emerged as the most effective, enabling \methodName to recover approximately 20\% more tokens compared to the other metrics. This substantial improvement can be attributed to the ability of the Weighted Layer-wise Cosine Similarity to better capture the nuanced relationships between layers and gradients. By assigning appropriate weights to different layers, this metric ensures that more critical layers contribute proportionally to the gradient alignment, leading to more accurate token recovery.

\subsection{Additional Results of Performance under PEFT Methods}\label{sec: peft}

To further evaluate the robustness of \methodName under PEFT-induced gradient sparsity, we conduct experiments on the Rotten Tomatoes and MIMIC-III datasets using two different PEFT methods: SLoRA and FedAdapter. We benchmark our method against LAMP and DAGER across BERT$_{Base}$ and GPT-2 models. In these experiments, we measure reconstruction quality using ROUGE-1 and ROUGE-2 scores, with a fixed batch size of 4 to ensure a fair comparison across all methods.

From Table~\ref{tab: peft_results_2}, we observe that the performance of both LAMP and DAGER deteriorates significantly under PEFT settings. This drop is especially pronounced under FedAdapter, where gradients become highly sparse. In contrast, \methodName consistently achieves the highest ROUGE scores across all settings. On the Rotten Tomatoes dataset, \methodName improves ROUGE-1 by up to $24\%$ on BERT${Base}$ and $11\%$ on GPT-2 under SLoRA, compared to DAGER. On the MIMIC-III dataset, these improvements are even more substantial, with up to $28\%$ and $22\%$ gains in ROUGE-1 for BERT${Base}$ and GPT-2 respectively, under FedAdapter.

\begin{table}[t]
\centering
\caption{Comparison of sequence reconstruction from gradients between \methodName and other baseline algorithms under two different PEFT methods with $b$ = 4 on different datasets.}
\resizebox{\columnwidth}{!}{
\begin{tabular}{cccccccccccccccccc}
\toprule
      \multirow{2}{4em}{\textbf{Dataset}} &\multirow{2}{4em}{\textbf{Model}}&\multirow{2}{4em}{\textbf{Method}} &\multicolumn{2}{c}{\textbf{w/o} \textbf{PEFT}} & &\multicolumn{2}{c}{\textbf{SLoRA}} & &\multicolumn{2}{c}{\textbf{FedAdapter}}\\\cline{4-5} \cline{7-8} \cline{10-11}
     
      && &R-1 &R-2 &&R-1 &R-2 &&R-1 &R-2 \\
     \hline
     \multirow{6}{4em}{Rotten Tomatoes}&\multirow{3}{4em}{BERT$_{Base}$}& LAMP  &24.6 &2.3  &&8.2 &0.0 &&2.4 &0.0 \\
     && DAGER &64.2 &42.8 &&37.2 &24.5 &&32.1 &22.9\\
     && \textbf{\methodName}  &\textbf{76.9} &\textbf{63.2} &&\textbf{61.5} &\textbf{50.5} &&\textbf{46.1} &\textbf{38.5}\\ 
    \cline{2-11}
    
    &\multirow{3}{4em}{GPT-2} & \text{LAMP}  & 6.3 & 1.2  &&0.0 &0.0   &&0.0  &0.0  \\
    && \text{DAGER}           & \textbf{97.2}  & \textbf{91.2}  &&58.3 &44.7 &&43.7 &31.1 \\
    && \textbf{\methodName}   &91.2 &86.8 &&\textbf{73.1} &\textbf{69.7} &&\textbf{54.7} &\textbf{53.1}\\
     \cline{2-11}

     \hline
     
     \multirow{6}{4em}{MIMIC-III}&\multirow{3}{4em}{BERT$_{Base}$}& LAMP  &7.4 &1.5 &&5.2 &1.2 &&4.8 &0.0 \\
     && DAGER &42.5 &31.9 &&25.5 &19.4 &&21.1 &16.2\\
     && \textbf{\methodName}  &\textbf{66.9} &\textbf{53.2} &&\textbf{53.6} &\textbf{42.4} &&\textbf{40.1} &\textbf{32.1}\\ 
    \cline{2-11}
    
    &\multirow{3}{4em}{GPT-2} & \text{LAMP}  &3.4 &0.8 &&2.1 &0.0 &&2.3 &0.0 \\
    && \text{DAGER}          &\textbf{90.2} &\textbf{82.1} &&54.1 &48.2 &&43.1 &23.5\\
    && \textbf{\methodName}  &88.1 &81.8 &&\textbf{66.1} &\textbf{62.7} &&\textbf{54.1} &\textbf{46.7}\\

    \bottomrule
\end{tabular}
}
\vspace{-3mm}
\label{tab: peft_results_2}
\end{table}

\subsection{Additional Results for Effect of Model Size}\label{sec: model_Size}
We report additional results in Table~\ref{tab: model_size_1} using GPT-2 models of increasing size (Small, Medium, and Large) on the CoLA dataset. From the table, we can see that the performance degrades as batch size increases, consistent with gradient dilution effects. Interestingly, smaller GPT-2 models retain higher reconstruction fidelity under larger batches compared to larger models. For instance, at $b=64$, the GPT-2 Small model achieves $86.3$ R-1, while GPT-2 Large drops to $58.7$. This supports prior findings~\cite{dengtaggradient, dimitrov2022lamp} that larger models diffuse gradient information more broadly, making reconstruction harder. Nonetheless, \methodName remains effective across all model sizes, showcasing its resilience even as gradient quality deteriorates in deeper architectures.

\begin{table}[t]
\centering
\caption{Effect of different model sizes on the performance of \methodName. Experiment with different sizes of GPT-2 models trained on the CoLA dataset.}
\resizebox{\columnwidth}{!}{
\begin{tabular}{cccccccccccc}
\hline
     \multirow{2}{4em}{{Model}} &\multicolumn{2}{c}{\centering $b$ = 1} &&\multicolumn{2}{c}{\centering $b$ = 8} &&\multicolumn{2}{c}{\centering $b$ = 32} &&\multicolumn{2}{c}{\centering $b$ = 64}\\\cline{2-3} \cline{5-6}\cline{8-9}\cline{11-12}
     
      & R-1 & R-2 & & R-1 & R-2 & & R-1 & R-2 & & R-1 & R-2\\
     \hline
     GPT-2 (Small)    &100.0 &100.0   &&94.1 &90.8  &&91.2 &87.6 &&86.3 &71.6\\
     GPT-2 (Medium)   &100.0 &100.0   &&93.7 &90.1  &&85.3 &81.1 &&77.9 &67.6\\
     GPT-2 (Large)    &100.0 &100.0   &&91.2 &87.2  &&75.8 &61.9 &&58.7 &45.7\\
     \hline
\end{tabular}
}
\label{tab: model_size_1}
\end{table}

\end{document}